\documentclass[10pt]{article}

\usepackage{booktabs} 
\usepackage{geometry}

\usepackage{url}
\usepackage{enumerate}
\usepackage{hyperref}

\usepackage{amsthm,amsfonts,amssymb,amsmath}
\usepackage{pifont}
\theoremstyle{definition}
\newtheorem{definition}{Definition}
\newtheorem{example}{Example}

\newtheorem{proposition}{Proposition}

\setcounter{secnumdepth}{4}

\usepackage[ruled, vlined]{algorithm2e}
\usepackage{algorithmicx}

\usepackage{tabularx,multirow,ragged2e}
\newcolumntype{Y}{>{\RaggedRight\arraybackslash}X}
\newcolumntype{P}[1]{>{\RaggedRight\arraybackslash}p{#1}}

\newcommand{\ie}{\emph{i.e.~}}
\newcommand{\iev}{\emph{i.e.,~}}
\newcommand{\cf}{\emph{cf.~}}
\newcommand{\eg}{\emph{e.g.~}}
\newcommand{\al}{\emph{al.~}}
\newcommand{\espace}{\vspace{7pt}}
\newcommand{\seq}[1]{\boldsymbol{#1}}

\newcommand{\NegPSpan}{{\sc NegPSpan}}

\usepackage{calc}

\usepackage{mathtools}
\newcommand\myeq{\stackrel{\mathclap{\normalfont\mbox{\tiny def}}}{:=}}


\begin{document}
\title{NegPSpan: efficient extraction of negative sequential patterns with embedding constraints}

\author{Thomas Guyet -- Agrocampus-Ouest/IRISA UMR6074\\
Ren\'e Quiniou, Univ Rennes, Inria, CNRS, IRISA}
%



\maketitle
\begin{abstract}
Mining frequent sequential patterns consists in extracting recurrent behaviors, modeled as patterns, in a big sequence dataset.
Such patterns inform about which events are frequently observed in sequences, \ie what does really happen. Sometimes, knowing that some specific event does not happen is more informative than extracting a lot of observed events. 
Negative sequential patterns (NSP) capture recurrent behaviors by patterns containing both observed events and absent events. 
Few approaches have been proposed to mine such NSPs. In addition, the syntax and semantics of NSPs differ in the different methods which makes it difficult to compare them.
This article provides a unified framework for the formulation of the syntax and the semantics of NSPs. Then, we introduce a new algorithm, \NegPSpan, that extracts NSPs using a PrefixSpan depth-first scheme and enabling \emph{maxgap} constraints that other approaches do not take into account. The formal framework allows for highlighting the differences between the proposed approach and the methods from the literature, especially with the state of the art approach eNSP.
Intensive experiments on synthetic and real datasets show that \NegPSpan\ can extract meaningful NSPs and that it can process bigger datasets than eNSP thanks to significantly lower memory requirements and better computation times. 
\end{abstract}


\section{Introduction}
In many application domains such as diagnosis or marketing, decision makers show a  strong interest for  rules that associates specific events (a context) to undesirable events to which they are correlated or that are frequently triggered in such a context. Sequential pattern mining algorithms can extract such hidden rules from execution traces or transactions. in the classical setting, sequential patterns contain only positive events, \ie really observed events.
However, the absence of a specific action or event can often better explain the occurrence of an undesirable situation \cite{Cao2015}. 
For example in diagnosis, if some maintenance operations have not been performed, \eg damaged parts have not been replaced, then a fault will likely occur in a short delay while if these operations were performed in time the fault would not occur. In marketing, if some market-place customer has not received special offers or coupons for a long time then she/he has a high probability of churning while if she/he were provided such special offers she/he should remain loyal to her/his market-place.
In these two cases, mining specific events, some present and some absent, to discover under which context some undesirable situation occurs or not may provide interesting so-called \emph{actionable} information for determining which action should be performed to avoid the undesirable situation, \ie fault in diagnosis, churn in marketing.

We aim at discovering sequential patterns that take into account the absence of some events called \emph{negative events} \cite{Cao2015}. 
Moreover, we want to take into account some aspect of the temporal dimension as well, maximal pattern span or maximal gap between the occurrences of pattern events. 
For example, suppose that from a sequence dataset, we want to mine a sequential pattern $\seq{p} = \langle a\  b\rangle$ with the additional
\emph{negative} constraint telling that the event $c$ should not appear between events $a$ and $b$ in $\seq{p}$.
The corresponding negative pattern is represented as $\seq{p} = \langle a \  \neg c \  b\rangle$, where the logical sign $\neg$ denotes an absent event or set of events. 
Once the general idea of introducing negative statements in a pattern has been stated, the syntax and  semantics of such negative patterns should be clearly formulated since they have a strong impact both on algorithms outcome and their computational efficiency.
As we will see, the few algorithms from literature do not use the same syntactical constraints and rely on very different semantics principles (see Section \ref{sec:negpatterns:relwork}).  
More precisely, the efficiency of eNSP \cite{cao2016nsp}, the state-of-the-art algorithm for NSP mining, comes from a negation semantics that enables efficient operations on the sets of supported sequences. 
The two computational limits of eNSP are memory requirements and the impossibility for eNSP to handle embedding constraints such as the classical \emph{maxgap} and \emph{maxspan} constraints. 
When mining relatively long sequences (above 20 itemsets), such constraints appear semantically sound to consider short pattern occurrences where events are not too distant. In addition, such constraints can efficiently prune the occurrence search space. 

This article provides two main contributions:
\begin{itemize}
\item we clarify the syntactic definition of negative sequential patterns and we provide different negation semantics with their properties.
\item we propose \NegPSpan, an algorithm inspired by algorithm PrefixSpan to extract negative sequential patterns with \textit{maxgap} and \textit{maxspan} constraints. 
\end{itemize}
 
Intensive experiments compare, on synthetic and real datasets, the performance of \NegPSpan\ and eNSP as well as the pattern sets extracted by each of them.
We show that algorithm \NegPSpan\ is more time-efficient than eNSP for mining long sequences thanks to the maxgap constraint and that its memory requirement is several orders of magnitude lower, enabling to process much larger datasets.
In addition, we highlight that eNSP misses interesting patterns on real datasets due to semantic restrictions.


\section{Negative Sequential Patterns}
\label{sec:negpatterns}
This section introduces sequential patterns and negative sequential pattern mining. First we recall some basic definitions about sequences of itemsets, and classical sequential pattern mining then we introduce some definitions of negative sequential patterns.

In the sequel, $[n]=\{1, \dots, n\}$ denotes the set of the first $n$ strictly positive integers.
Let $(\mathcal{I}, <)$ be the set of items (alphabet) associated with a total order (\eg lexicographic order). 
An \emph{itemset} $A=\{a_1\ a_2\ ...\ a_m\}\subset \mathcal{I}$ is a set of ordered items.
A \emph{sequence} $\seq{s}$ is a set of sequentially ordered itemsets $\seq{s} = \langle s_1\ s_2\ ...\ s_n\rangle$. This means that for all $i,j \in [n],\; i < j$, $s_i$ appends before $s_j$ in sequence $\seq{s}$. This sequence starts by $s_1$ and finishes by $s_n$. 
Mining sequential patterns from a dataset of sequences, denoted $\mathcal{D}$, consists in extracting the frequent subsequences (patterns) included in database sequences having a support (\iev the number of sequences in which the pattern occurs) greater than a given threshold $\sigma$.
There is a huge literature about sequential pattern mining.
We will not go into details and refer the reader to a survey of the literature, such as Mooney et \al \cite{Mooney:2013}.

\espace

Negative sequential patterns (NSP) extend classical sequential patterns by enabling the specification of absent itemsets.
For example, $\seq{p} = \langle a\ b\ \neg c\ e\ f \rangle$ is a negative pattern.
The symbol $\neg$ before $c$ denotes that $c$ is a negative itemset (here reduced to an item).
Semantically, $\seq{p}$ specifies that items $a$ and $b$ happen in a row, then items $e$ and $f$ occur in a row, but item $c$ does not occur between the occurrences of $bs$ and $e$.

In the field of string matching, negation is classically defined for regular expression. 
In this case, a pattern is an expression that can hold any kind of negated \textit{pattern}. 
The same principle gives the following most generic definition of negative sequential patterns:
Let $\mathcal{N}$ be the set of negative patterns. 
A negative pattern $\seq{p} = \langle p_1\ \dots\ p_n \rangle \in \mathcal{N}$ is a sequence where $\forall i,\;p_i$ is a positive itemset ($p_i \subset \mathcal{I}$) or a negated pattern ($p_i=\neg \{q_i\},\; q_i\in \mathcal{N}$).

Due to its infinite recursive definition, $\mathcal{N}$ appears to be too huge to be an interesting and tractable search space for pattern mining. 
For instance, with $\mathcal{I}=\{a,b,c\}$, it is possible to express simple patterns like $\langle a\ \neg b\ c \rangle$ but also complex patterns like $\left\langle a,\neg \left\langle b,c \right\rangle \right\rangle$. The combinatorics for such patterns is infinite.

%

\bigskip

\label{sec:negpatterns:definitions}
We now provide our definition of negative sequential patterns (NSP) which introduces some syntactic restriction compare to the most generic case. These simple restrictions are broadly used in the literature \cite{Kamepalli:2014:FrequentNS} and enable us to propose efficient algorithms.

\begin{definition}[Negative sequential patterns (NSP)]\label{def:negativepattern}
A negative pattern $\seq{p} = \langle p_1\,\dots\ p_n\,\rangle$ is a sequence where $\forall i,\;p_i$ is a positive itemset ($p_i=\{p_i^j\}_{j\in[m]},\; p_i^j \in\mathcal{I}$) or a negated itemset ($p_i=\neg \{q_i^j\}_{j\in[m']},\; q_i^j\in \mathcal{I}$) under the two following constraints: consecutive negative itemsets and
negative itemsets at the pattern boundaries are forbidden. 
The \emph{positive part}\footnote{Called the \emph{maximal positive subsequence} in PNSP and Neg-GSP or the \emph{positive element id-set} in eNSP.} of pattern $\seq{p}$, denoted $\seq{p}^+$, is the subsequence of $\seq{p}$ restricted to its positive itemsets.
\end{definition}

According to the constraint of non consecutive negative itemsets, a negative pattern $\seq{p}$ can be denoted by
$\seq{p} = \langle p_1\ \neg q_1\ p_2\ \neg q_2\ \dots p_{n-1}\ \neg q_{n-1}\ p_{n}\rangle$ where $\forall i,\; p_i \subseteq I\setminus \emptyset$ and $q_i \subseteq I$. With this notation, $\seq{p}^+ = \langle p_1\ p_2\ \dots p_{n-1}\ p_{n} \rangle$.

\espace

Let us illustrate our syntactic restrictions by some \textbf{contra-examples of patterns} that our approach does not extract:
\begin{itemize}
\item first of all, a pattern is a sequence of positives and negative itemsets. It is not possible to have patterns such as $\left\langle a,\neg \left\langle b,c \right\rangle \right\rangle$
\item then, successive negated itemsets are not allowed: $\left\langle a\ \neg b\ \neg c d \right\rangle$ is not possible.
\item finally, a pattern finishing or starting by a negated itemsets is also not allowed $\left\langle \neg b\ d \right\rangle$.
\end{itemize}

\subsection{Semantics of Negative Sequential Patterns}\label{sec:negpatterns:semantics}
The semantics of negative sequential patterns relies on \emph{negative containment}: a sequence $s$ supports pattern $p$ if $s$ contains a sub-sequence $s'$ such that every positive itemset of $p$ is included in some itemset of $s'$ in the same order and for any negative itemset $\neg i$ of $p$, $i$ is \emph{not included} in any itemset occurring in the sub-sequence of $s'$ located between the occurrence of the positive itemset preceding $\neg i$ in $p$ and the occurrence of the positive itemset following $\neg i$ in $p$.

So far in the literature, the absence or non-inclusion of itemsets (represented here as a negative itemset) has been specified by loose formulations.
The authors of PNSP have proposed the set symbol $\nsubseteq$ to specify non-inclusion.
This symbol is misleading since it does not correspond to the associated semantics given in PNSP: an itemset $I$ is absent from an itemset $I'$ if the entire set $I$ is absent from $I'$ (as opposed to at least some item from $I$ is absent from $I'$) which corresponds to $I \cap I' =\emptyset$ in standard set notation, and not $I \not\subseteq I'$. We will call PNSP interpretation \emph{total non inclusion}. It should be distinguished from \emph{partial non inclusion} which corresponds (correctly) to the set symbol $\nsubseteq$. The symbol $\nsubseteq$ was further used by the authors of Neg-GSP and eNSP. The semantics of non inclusion is not detailed in Neg-GSP and one cannot determine if it means total or partial non inclusion.\footnote{Actually, though not clearly stated, it seems that the negative elements of Neg-GSP patterns consist of items rather than  itemsets. In this case, total and partial inclusion are equivalent.} eNSP does not define explicitly the semantics of non inclusion but, from the procedure used to compute the support of patterns, one can deduce that it uses total non inclusion.

\begin{definition}[non inclusion]\label{def:IS_notincluded}
We introduce two operators relating two itemsets $P$ and $I$:
\begin{itemize}
\item partial non inclusion: $P\not\preceq I \Leftrightarrow \exists e \in P$, $e \notin I$
\item total non inclusion: $P\not\sqsubseteq I \Leftrightarrow \forall e \in P, e \notin I$
\end{itemize}

Choosing one non inclusion interpretation or the other has consequences on extracted patterns as well as on pattern search. Let's illustrate this on related pattern support in the sequence dataset 
$$\mathcal{D} = \left\{
\begin{array}{l}
\seq{s}_1=\langle (bc)\ f\ a \rangle \\
\seq{s}_2=\langle (bc)\ (cf)\ a \rangle \\
\seq{s}_3=\langle (bc)\ (df)\ a \rangle \\
\seq{s}_4=\langle (bc)\ (ef)\ a \rangle\\
\seq{s}_5=\langle (bc)\ (cdef)\ a \rangle
\end{array}
\right\}.$$
Table \ref{tab:partial-total} compares the support of progressively extended patterns under the two semantics to show whether anti-monotonicity is respected or not.
Let's consider pattern $\seq{p}_2$ on sequence $\seq{s}_2$. Considering that the positive part of $\seq{p}_2$ is in $\seq{s}_2$, $\seq{p}_2$ occurs in the sequence iff $(cd)\not\subseteq (cf)$. In case of total non inclusion, it is false that $(cd)\not\sqsubseteq (cf)$ because of $c$ that occurs in $(cf)$, and thus $\seq{p}_2$ does not occur in $\seq{s}_2$. But in case of a partial non inclusion, it is true that $(cd)\not\preceq (cf)$, because of $d$ that does not occurs in $(cf)$, and thus $\seq{p}_2$ occurs in $\seq{s}_2$.

\begin{table}[tb]
\small\centering
\caption{Lists of supported sequences in $\mathcal{D}$ by negative patterns $\seq{p}_i$, $i=1..4$ under the total and partial non inclusion semantics. Every pattern has the shape $\langle a\ \neg q_i\ b\rangle$ where $q_i$ are itemsets such that $q_i \subset q_{i+1}$.}
\label{tab:partial-total}

    \begin{tabular}{lcc}
        \hline
        ~                               & partial & total \\ 
        ~                               & non inclusion & non inclusion \\
        ~ & $\not\preceq$ & $\not\sqsubseteq$\\
        \hline
        $\seq{p}_1 = \langle b \neg c a \rangle$      & $\{\seq{s}_1, \seq{s}_3, \seq{s}_4\}$           & $\{\seq{s}_1, \seq{s}_3, \seq{s}_4\}$ \\ 
        $\seq{p}_2 = \langle b \neg (cd) a \rangle$   & $\{\seq{s}_1, \seq{s}_2, \seq{s}_3, \seq{s}_4\}$ & $\{\seq{s}_1, \seq{s}_4\}$ \\ 
        $\seq{p}_3 = \langle b \neg (cde) a \rangle$  & $\{\seq{s}_1, \seq{s}_2, \seq{s}_3, \seq{s}_4\}$   & $\{\seq{s}_1\}$ \\ 
        $\seq{p}_4 = \langle b \neg (cdeg) a \rangle$ &$\{\seq{s}_1, \seq{s}_2, \seq{s}_3, \seq{s}_4,\seq{s}_5\}$ & $\{\seq{s}_1\}$ \\ \hline
        ~                               & monotonic           & anti monotonic        \\
        \hline
    \end{tabular}
\end{table}

Obviously, partial non inclusion satisfies anti-monotonicity while total non inclusion does not.
In the sequel we will denote the general form of itemset non inclusion by the symbol $\nsubseteq$,  meaning either $\not\preceq$ or $\not\sqsubseteq$.
\end{definition}
 
Now, we formulate the notions of sub-sequence, non inclusion and absence by means of the concept of embedding.

\begin{definition}[positive pattern embedding]\label{def:positivepattern_embedding}
Let $\seq{s}=\langle s_1\,\dots\, s_n\rangle$ be a sequence and $\seq{p}=\langle p_1\,\dots\, p_m\rangle$ be a (positive) sequential pattern.
$\seq{e}=(e_i)_{i\in[m]}\in [n]^m$ is an \emph{embedding} of pattern $\seq{p}$ in sequence $\seq{s}$ iff $\forall i\in[m],\; p_i \subseteq s_{e_i}$ and $\forall i\in[m-1],\; e_{i}<e_{i+1}$
\end{definition}

\begin{definition}[Strict and soft embeddings of negative patterns]\label{def:NSP_embedding}
Let $\seq{s}=\langle s_1\,\dots\, s_n\rangle$ be a sequence and $\seq{p}=\langle p_1\,\dots\, p_m\rangle$ be a negative sequential pattern.

$\seq{e}=(e_i)_{i\in[m]}\in [n]^m$ is a \textbf{soft-embedding} of pattern $\seq{p}$ in sequence $\seq{s}$ iff $\forall i\in[m]$:
\begin{itemize}
\item $p_i \subseteq s_{e_i}$ if $p_i$ is positive
\item $p_i \nsubseteq s_j,\;\forall j\in [e_{i-1}+1,e_{i+1}-1]$ if $p_i$ is negative
\end{itemize}

$\seq{e}=(e_i)_{i\in[m]}\in [n]^m$ is a \textbf{strict-embedding} of pattern $\seq{p}$ in sequence $\seq{s}$ iff for all $i\in[m]$:
\begin{itemize}
\item $p_i \subseteq s_{e_i}$ if $p_i$ is positive
\item $p_i \nsubseteq \bigcup_{j\in [e_{i-1}+1,e_{i+1}-1]} s_j$ if $p_i$ is negative
\end{itemize}
\end{definition}

\begin{proposition}
\label{prop:sqsubset_eqembeddings}
\emph{soft}- and \emph{strict}-embeddings are equivalent when $\nsubseteq\myeq\not\sqsubseteq$.
\end{proposition}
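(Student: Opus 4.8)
The plan is to show that, under total non-inclusion (the hypothesis $\nsubseteq\,=\,\not\sqsubseteq$), a fixed vector $\seq{e}\in[n]^m$ satisfies the soft-embedding conditions of Definition~\ref{def:NSP_embedding} if and only if it satisfies the strict-embedding conditions; this yields equality of the two embedding sets. Since each notion is a conjunction of per-position constraints, I would argue position by position.

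First I would observe that at every \emph{positive} position $i$ the two definitions impose exactly the same constraint, $p_i\subseteq s_{e_i}$. Hence any discrepancy can only come from negative positions, and it suffices to prove that the soft and strict negative constraints coincide for each such $i$.

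Fixing a negative position $i$ and writing $J_i=[e_{i-1}+1,\,e_{i+1}-1]$ for the index range between the surrounding positive occurrences (well defined because, by Definition~\ref{def:negativepattern}, a negative itemset is neither at a pattern boundary nor adjacent to another negative itemset, so $p_{i-1}$ and $p_{i+1}$ are positive and $e_{i-1},e_{i+1}$ exist), I would unfold both constraints using the definition of $\not\sqsubseteq$. The soft condition reads $\forall j\in J_i,\; p_i\not\sqsubseteq s_j$, that is $\forall j\in J_i,\;\forall e\in p_i,\; e\notin s_j$. The strict condition reads $p_i\not\sqsubseteq\bigcup_{j\in J_i}s_j$, that is $\forall e\in p_i,\; e\notin\bigcup_{j\in J_i}s_j$.

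The crux, and the only real step, is the elementary equivalence $e\notin\bigcup_{j\in J_i}s_j \Leftrightarrow \forall j\in J_i,\; e\notin s_j$, which simply expresses that membership in a union is existential; combined with commuting the two universal quantifiers over $e$ and $j$, this rewrites the strict condition into the soft condition verbatim. I would close by noting that this is precisely where \emph{totality} is essential: for partial non-inclusion the quantifier over the elements of $p_i$ becomes existential, and an existential no longer commutes with the universal quantifier over $j$, so the equivalence breaks down (consistently with the monotonicity discrepancy recorded in Table~\ref{tab:partial-total}).
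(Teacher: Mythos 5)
Your proof is correct and follows essentially the same route as the paper's: both reduce the claim to the negative positions, unfold total non-inclusion, and use the interchange of the two universal quantifiers together with the fact that non-membership in a union is the conjunction of non-memberships. You are in fact slightly more explicit than the paper (which proves one direction and appeals to symmetry), and your closing remark on why the argument fails for $\not\preceq$ is a useful addition, consistent with Table~\ref{tab:partial-total}.
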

\begin{proof}
see Appendix \ref{sec:proofs}.
\end{proof}

Let $\seq{p}^+=\langle p_{k_1}\,\dots\, p_{k_l}\rangle$ be the positive part of some pattern $\seq{p}$, where $l$ denotes the number of positive itemsets in $\seq{p}$. If $\seq{e}$ is an embedding of pattern $\seq{p}$ in some sequence $\seq{s}$, then $\seq{e}^+=\langle e_{k_1}\,\dots\, e_{k_l}\rangle$ is an embedding of the positive sequential pattern $\seq{p}^+$ in $\seq{s}$.

\espace

The following examples illustrate the impact of itemset non-inclusion operator and of embedding type.

\begin{example}[Itemset absence semantics]
Let $\seq{p}=\langle a\ \neg (bc)\ d\rangle$ be a pattern and four sequences:
\begin{center}
\begin{tabular}{lccc}
\hline 
Sequence & $\not\sqsubseteq$ & $\npreceq$ / strict-embedding & $\npreceq$ / soft-embedding \\ \hline
$\seq{s_1}=\langle a\ c\ b\ e\ d \rangle$ &\ding{51} & &\\
$\seq{s_2}=\langle a\ (bc)\ e\ d \rangle$ && &\\
$\seq{s_3}=\langle a\ b\ e\ d \rangle$ &\ding{51}&\ding{51} &\\
$\seq{s_4}=\langle a\ e\ d \rangle$ &\ding{51}&\ding{51} &\ding{51}\\
\hline
\end{tabular}
\end{center}

One can notice that each sequence contains a unique occurrence of $\langle a\ d \rangle$, the positive part of pattern $\seq{p}$. 
Using soft-embeddings and total non-inclusion ($\nsubseteq\myeq\not\sqsubseteq$), $\seq{p}$ occurs in $\seq{s_1}$, $\seq{s_3}$ and $\seq{s_4}$ but not in $\seq{s_2}$. 
Using the strict-embedding semantics and partial non-inclusion, $\seq{p}$ occurs in sequence $\seq{s_3}$ and  $\seq{s_4}$ considering that items $b$ and $c$ occur between occurrences of $a$ and $d$ in sequences $1$ and $2$. 
%
With partial non inclusion ($\nsubseteq\myeq\not\preceq$) and either type of embeddings, the absence of an itemset is satisfied if any of its item is absent. As a consequence,  $\seq{p}$ occurs only in sequence $\seq{s_4}$. 
\end{example}

Another point that determines the semantics of negative containment concerns the multiple occurrences of some pattern in a sequence: should every or only one occurrence of the pattern positive part in the sequence satisfy the non inclusion constraints? This point is not discussed in previous propositions for negative sequential pattern mining. Actually, PNSP and Neg-GSP require a weak absence (at least one occurrence should satisfy the non inclusion constraints) while eNSP requires a strong absence (every occurrence should satisfy non inclusion constraints). 

\begin{definition}[Negative pattern occurrence] \label{def:neg_occurrence}
Let $\seq{s}$ be a sequence, $\seq{p}$ be a negative sequential pattern, and $\seq{p}^+$ the positive part of $\seq{p}$.
\begin{itemize}
\item Pattern $\seq{p}$ \emph{softly-occurs} in sequence $\seq{s}$, denoted $\seq{p} \preceq \seq{s}$, iff there exists at least one (strict/soft)-embedding of $\seq{p}$ in $\seq{s}$.
\item Pattern $\seq{p}$ \emph{strictly-occurs} in sequence $\seq{s}$, denoted $\seq{p} \sqsubseteq \seq{s}$, iff for any embedding $\seq{e}'$ of $\seq{p}^+$ in $\seq{s}$ there exists an embedding $\seq{e}$ of $\seq{p}$ in $\seq{s}$ such that $\seq{e}'=\seq{e}^+$.
\end{itemize}
\end{definition}

Definition \ref{def:neg_occurrence} allows for formulating two notions of absence semantics for negative sequential patterns depending on the occurrences of the positive part:
\begin{itemize}
\item \emph{strict occurrence}: a negative pattern $\seq{p}$ occurs in a sequence $\seq{s}$ iff there exists at least one occurrence of the positive part of pattern $\seq{p}$ in sequence $\seq{s}$ and \textbf{every} such occurrence satisfies the negative constraints,
\item \emph{soft occurrence}: a negative pattern $\seq{p}$ occurs in a sequence $\seq{s}$ iff there exists at least one occurrence of the positive part of pattern $\seq{p}$ in sequence $\seq{s}$ and \textbf{one} of these occurrences satisfies the negative constraints.
\end{itemize}

\begin{example}[Strict vs soft occurrence semantics]
Let $\seq{p}=\langle a\ b\ \neg c\  d\  \rangle$ be a pattern and  $\seq{s_1}=\langle a\ b\ e\ d \rangle$ and $\seq{s_2}=\langle a\ b\ c\ a\ d\ e\ b\ d \rangle$ be two sequences. The positive part of $\seq{p}$ is $\langle a\ b\ d \rangle$. It occurs once in $\seq{s_1}$ so there is no difference for occurrences under the two semantics. But, it occurs thrice in $\seq{s_2}$  with embeddings $(1,2,5)$, $(1,2,8)$ and $(4,7,8)$. The two first occurrences do not satisfy the negative constraint ($\neg c$) while the second occurrence does.
Under the soft occurrence semantics, pattern $p$ occurs in sequence $\seq{s_2}$ whereas under the strict occurrence semantics it does not.
\end{example}

We also introduce \textbf{constrained negative sequential patterns}. We consider the two most common anti-monotonic constraints on sequential patterns:  \textit{maxgap} ($\theta\in\mathbb{N}$) and \textit{maxspan} ($\tau\in\mathbb{N}$) constraints. These constraints impact NSP embeddings.
An embedding $\seq{e}$ of a pattern $\seq{p}$ in some sequence $\seq{s}$ satisfies the \textit{maxgap} (resp. \textit{maxspan}) constraint iff $\seq{e}^+=\{e_i,\dots,e_n\}$, the embedding of the positive part of $\seq{p}$ satisfies the constraint, \ie $\forall i\in[n-1],\; e_{i+1}-e_{i}\leq \theta$ (resp. $e_n- e_1\leq \tau$).

\espace

The definitions of pattern support, frequent pattern and pattern mining task derives naturally from the notion of occurrence of a negative sequential pattern, no matter the choices for embedding (soft or strict), non inclusion (partial or total) and occurrence (soft or strict). However, these choices concerning the semantics of NSPs impact directly the number of frequent patterns (under the same minimal threshold) and further the computation time. The stronger the negative constraints, the lesser the number of sequences that hold some pattern, and the lesser the number of frequent patterns.

\espace

Finally, we introduce a partial order on NSPs that is the foundation of our efficient NSP mining algorithm.

\begin{definition}[NSP partial order]\label{def:partialorder}
Let $\seq{p} = \langle p_1\ \neg q_1\ p_2\ \neg q_2\ \dots p_{k-1}\ \neg q_{k-1}\ p_{k}\rangle$ and $\seq{p}' =\langle p'_1\ \neg q'_1\ p'_2\ \neg q'_2\ \dots $ $ p'_{k'-1}\ \neg q'_{k'-1}\ p'_{k'}\rangle$ be two NSPs s.t. $\forall i \in [k],\; p_i\neq \emptyset$ and $\forall i\in [k'],\; p'_i\neq \emptyset$. By definition, $\seq{p}\lhd\seq{p}'$ iff $k\leq k'$ and:
\begin{enumerate}
\item $\forall i \in [k-1],\; p_i\subseteq p'_i$ and $q_i\subseteq q'_i$
\item $p_k\subseteq p'_k$
\item $k'\neq k \implies p_k\neq p'_{k}$ (non-reflexive)
\end{enumerate}
\end{definition}


Intuitively, $\seq{p}\lhd\seq{p}'$ if $\seq{p}$ is shorter than $\seq{p}'$ and the positive and negative itemsets of $\seq{p}$ are pairwise included into the itemsets of $\seq{p}'$, but, in case of extension by additional itemsets.
The classical pattern inclusion fails to be anti-monotonic \cite{zheng:2009:negative}, since the change of scope of negative itemsets. We illustrate what's happening on two examples.
Let first consider the case of an ending negated itemset illustrated by Zheng et \al. with patterns $\seq{p}'=\langle b\ \neg c\ a\rangle$ and $\seq{p}=\langle b\ \neg c\rangle$: removing the $a$ make the positive pattern less constraint (more frequent), but is extend the scope of the negative constraint. Negation are more constraint and the anti-monotonicity is lost. 
This specific case does not impact our framework as our definition of NSP (see Definition \ref{def:negativepattern}) does not allow ending negated itemsets. But let us now consider the patterns $\seq{p}'=\langle b\ \neg c\ d a\rangle$ and $\seq{p}=\langle b\ \neg c a\rangle$, and the sequences $\seq{s}=\langle b\ e\ d\ c\ a\rangle$. $\seq{p}'$ occurs in $\seq{s}$ but not $\seq{p}$ has the scope of the negated itemset $\neg c$ changed it was restricted to the interval between $b$ and $d$ occurrence for $\seq{p}'$, but between $b$ and $a$ for $\seq{p}$.

What is important in our partial order $\lhd$, is that the embedding of the positive pattern yields an embedding for $\seq{p}$ that imposes the negative constraints on the exact same scopes than negative constraints of $\seq{p}'$. Thanks to the anti-monotonicity of $\not\sqsubseteq$, additional itemsets in negative patterns leads to over constraints the sequence.
These remarks give some intuition behind the following anti-monotonicity property (Proposition \ref{prop:antimonotonic}). The formal proof of the proposition can be found in Appendix \ref{sec:proofs}.


\begin{proposition}[Anti-monotonicity of NSP support] 
\label{prop:antimonotonic}
The support of NSP is anti-monotonic with respect to $\lhd$ when $\nsubseteq\myeq\not\sqsubseteq$ and soft-occurrences ($\preceq$) are considered.
\end{proposition}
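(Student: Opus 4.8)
The plan is to prove the stronger set-theoretic statement that drives the counting: whenever $\seq{p}\lhd\seq{p}'$, every sequence in which $\seq{p}'$ softly-occurs is a sequence in which $\seq{p}$ softly-occurs. Since the support is just the number of such sequences, this inclusion of supporting sets yields $\mathrm{supp}(\seq{p})\geq\mathrm{supp}(\seq{p}')$, which is exactly anti-monotonicity with respect to $\lhd$. So I fix a sequence $\seq{s}$ with $\seq{p}'\preceq\seq{s}$. By Definition \ref{def:neg_occurrence}, soft-occurrence means there is \emph{at least one} embedding $\seq{e}'$ of $\seq{p}'$ in $\seq{s}$ (soft- and strict-embeddings coinciding here by Proposition \ref{prop:sqsubset_eqembeddings}, since $\nsubseteq\,\myeq\,\not\sqsubseteq$). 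Hence it suffices to exhibit a single embedding of $\seq{p}$ in $\seq{s}$; this is precisely what the soft-occurrence regime buys us.

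Writing $\seq{p}=\langle p_1\ \neg q_1\ \dots\ \neg q_{k-1}\ p_k\rangle$ and $\seq{p}'=\langle p'_1\ \neg q'_1\ \dots\ \neg q'_{k'-1}\ p'_{k'}\rangle$ with $k\leq k'$, let $f'_1<\dots<f'_{k'}$ be the positions assigned by $\seq{e}'$ to the positive itemsets of $\seq{p}'$ (the entries of $(\seq{e}')^+$). The natural candidate is the embedding $\seq{e}$ whose positive part is the prefix $\seq{e}^+=(f'_1,\dots,f'_k)$, sending $p_i$ to position $f'_i$. These positions are strictly increasing, and the positive conditions hold immediately: by clauses (1)--(2) of Definition \ref{def:partialorder} we have $p_i\subseteq p'_i\subseteq s_{f'_i}$ for every $i\in[k]$.

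It remains to verify the negative conditions. Because $\seq{p}^+$ is embedded on the exact positions $f'_1,\dots,f'_k$, the scope of its $i$-th negative itemset $\neg q_i$ is the window $[f'_i+1,\,f'_{i+1}-1]$, which is \emph{identical} to the scope of $\neg q'_i$ under $\seq{e}'$ for each $i\in[k-1]$. For any index $j$ in this window, $\seq{e}'$ being an embedding gives $q'_i\not\sqsubseteq s_j$, i.e.\ no element of $q'_i$ occurs in $s_j$; since $q_i\subseteq q'_i$ by clause (1), a fortiori no element of $q_i$ occurs in $s_j$, so $q_i\not\sqsubseteq s_j$. Thus $\seq{e}$ satisfies every negative constraint of $\seq{p}$, is a valid soft-embedding, and witnesses $\seq{p}\preceq\seq{s}$.

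The crux of the argument — and the reason the hypothesis $\nsubseteq\,\myeq\,\not\sqsubseteq$ is indispensable — is the monotonicity step $q_i\subseteq q'_i \wedge q'_i\not\sqsubseteq s_j \Rightarrow q_i\not\sqsubseteq s_j$: total non inclusion is preserved when the itemset shrinks, whereas partial non inclusion is not, as the witnessing absent item of $q'_i$ may belong to $q'_i\setminus q_i$; this is exactly the failure of anti-monotonicity recorded in Table \ref{tab:partial-total}. The only point demanding care, and the main obstacle to a naive proof, is the \textbf{scope alignment}: one must embed $\seq{p}^+$ on the very positions used by $\seq{p}'$ so that the negative windows coincide, since reusing an unrelated embedding of $\seq{p}^+$ would shift the windows and break the transfer of non inclusion. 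Finally, as $(f'_1,\dots,f'_k)$ is a prefix of $(\seq{e}')^+$, its consecutive gaps and its overall span are bounded by those of $(\seq{e}')^+$, so the construction also respects the \textit{maxgap} and \textit{maxspan} constraints and the proof extends unchanged to constrained NSPs.
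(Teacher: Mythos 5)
Your proof is correct and follows essentially the same route as the paper's: restrict the given embedding of $\seq{p}'$ to its first $k$ positive positions so that the negative scopes coincide, transfer the positive conditions via $p_i\subseteq p'_i$, transfer the negative conditions via the anti-monotonicity of $\not\sqsubseteq$ under shrinking of $q_i\subseteq q'_i$, and check that the prefix inherits the \textit{maxgap}/\textit{maxspan} bounds. Your explicit remarks on scope alignment and on why partial non inclusion would break the argument are accurate and match the discussion the paper gives just before the proposition.
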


We can notice that while the strict occurrence semantic ($\sqsubseteq$) is used, $\lhd$ lost the anti-monotonicity. Considering $\seq{p}'=\langle a\ (bc)\ \neg c\ d\rangle$, $\seq{p}=\langle a\ b\ \neg c\ d\rangle$ and $\seq{s}=\langle a\ \mathbf{b}\ (bc)\ e\ d\rangle$, then it is true that $\seq{p}' \sqsubseteq \seq{s}$, but not that $\seq{p} \sqsubseteq \seq{s}$. In the second case, there are two possible embeddings and the second one (which does not derived from the embedding of $\seq{p}'$) does not satisfy the negative constraint.

A second example illustrates another case that is encountered when a postfixed sequence restricts the set of valid embeddings: $\seq{p}'=\langle a\ \neg b\ d\ \mathbf{c}\rangle$, $\seq{p}=\langle a\ \neg b\ d\rangle$ and $\seq{s}=\langle a\ e\ d\ c\ b\ d\rangle$. Again, $\seq{p}'$ occurs only once while $\seq{p}$ occur twice and one of its embeddings does not satisfy the negated itemset.
This example shows that a simple postfix extension of NSP leads to loose the monotonicity property while the strict occurrence semantic is considered.

\section{Related Work} \label{sec:negpatterns:relwork}
Kamepalli et \al provide a survey of the approaches proposed for mining negative patterns \cite{Kamepalli:2014:FrequentNS}.
The three most significant algorithms appear to be PNSP, Neg-GSP and eNSP.
We briefly review each of them in the following paragraphs.

PNSP (Positive and Negative Sequential Patterns mining) \cite{hsueh:2008:PNSP} is the first algorithm proposed for mining full negative sequential patterns where negative itemsets are not only located at the end of the pattern.
PNSP extends algorithm GSP \cite{srikant1996:GSP} to cope with mining negative sequential patterns.
PNSP consists of three steps: i) mine frequent positive sequential patterns, by using algorithm GSP, ii) preselect negative sequential itemsets --- for PNSP, negative itemsets must not be too infrequent (should have a support less than a threshold \textit{miss\_freq}) --- iii) generate candidate negative sequences levelwise and scan the sequence dataset again to compute the support of these candidates and prune the search when the candidate is infrequent.
This algorithm is incomplete: the second parameter reduces the set of potential negative itemsets.
Moreover, the pruning strategy of PNSP is not correct \cite{zheng:2009:negative} and PNSP misses potentially frequent negative patterns.

Zheng et \al \cite{zheng:2009:negative} also proposed a negative version of algorithm  GSP, called Neg-GSP, to extract negative sequential patterns. They showed that traditional Apriori-based negative pattern mining algorithms relying on support anti-monotonicity have two main problems. The first one is that the Apriori principle does not apply to negative sequential patterns.
They gave an example of sequence that is frequent even if one of its sub-sequence is not frequent.
The second problem has to do with the efficiency and the effectiveness of finding frequent patterns due to a vast candidate space. 
Their solution was to prune the search space using the support anti-monotonicity over positive parts. 
This pruning strategy is correct but incomplete and it is not really efficient considering the huge number of remaining candidates whose support has to be evaluated.
To improve the efficiency of their approach, the authors proposed an incomplete heuristic search based on Genetic Algorithm to find negative sequential patterns \cite{zheng2010efficient}. 
We will see in Section \ref{sec:negpatterns:semantics} that anti-monotonicity can be defined considering a partial order relation based on common prefixes that a enable to design a complete, correct and efficient algorithm.

eNSP (efficient NSP) has been recently proposed by Cao et \al \cite{cao2016nsp}. It identifies NSPs by computing only frequent positive sequential patterns and deducing negative sequential patterns from positive patterns. 
Precisely, Cao et \al showed that the support of some negative pattern can be 
computed by arithmetic operations on the support of its positive sub-patterns, 
thus avoiding additional sequence database scans to compute the support of negative patterns. However, this necessitates to store all the (positive) sequential patterns with their set of covered sequences (tid-lists) which may be impossible in case of big dense datasets and low minimal support thresholds.
This approach makes the algorithm more efficient but it hides some restrictive constraints on the extracted patterns. First, a frequent negative pattern whose so-called positive partner (the pattern where all negative events have been switched to positive) is not frequent will not be extracted.
Second, every occurrence of a negative pattern in a sequence should satisfy absence constraints. We call this \emph{strong absence semantics} (see Section \ref{sec:negpatterns:semantics}).
These features lead eNSP to extract less patterns than previous approaches.
In some practical applications, eNSP may miss potentially interesting negative patterns from the dataset.

The first constraint has been partly tackled by Dong et \al with algorithm eNSPFI, an extension of eNSP which mines NSPs from frequent and some infrequent positive sequential patterns from the negative border \cite{Gong2017eNSPFI}.
E-msNSP \cite{xu2017msnsp} is another extension of eNSP  which uses multiple minimum supports:
an NSP is frequent if its support is greater than a local minimal support threshold computed from the content of the pattern and not a global threshold as in classical approaches.
A threshold is associated with each item, and the minimal support of a pattern is defined from the most constrained item it contains. Such kind of adaptive support prevents from extracting some useless patterns still keeping the pattern support anti-monotonic. The same authors also proposed high utility negative sequential patterns based on the same principles \cite{xu2017HighUtilityNSP} and applied on smart city data \cite{xu2018efficient}. An alternative approach has been proposed by Lin consisting in mining high-utility itemsets with negative unit profits \cite{lin2016fhn} but is not applied on sequential patterns. 
It is worth noting that this algorithm relies basically on the same principle as eNSP and so, present the same drawbacks, heavy memory requirements, strong absence semantics for negation. 
F-NSP+ \cite{dong2018f} extends the eNSP algorithm to use bitmap representations of itemsets. Using bitmap representations enable to speed up the eNSP algorithm, thanks to very efficient set operation on bitmaps. The F-NSP algorithm has a poor memory usage, while F-NSP+, which adapts the bitmap size to the dataset, requires slightly less memory.

SAPNSP \cite{liu2015sapnsp} tackles the problem of large amount of patterns by selecting frequent negative and positive patterns that are actionable. Patterns are actionable while they conform to \textit{special rules}. 

NegI-NSP \cite{qiu2017negi} proposes additional syntactic constraints on negative itemsets and uses the same strategy as e-NSP.

\espace

\begin{table*}[tb]
\small
\caption{Comparison of negative pattern mining proposals. Optional constraints are specified in Italic.}
\label{NSP-comparison}
\begin{tabularx}{\textwidth}{|l|Y|Y|Y|Y|} 
\hline & \textbf{PNSP} \cite{hsueh:2008:PNSP} & \textbf{NegGSP} \cite{zheng:2009:negative} & \textbf{eNSP} \cite{cao2016nsp} & \textbf{\NegPSpan} \\ 
\hline 
\textbf{negative elements} &  itemsets &  items\textit{?} &  itemsets &  itemsets \\ 
\hline 
\textbf{itemsets} & $\not\preceq$? & $\not\preceq$ & $\not\sqsubseteq$ & $\not\sqsubseteq$ \\ 
\hline 
\textbf{embeddings} & strict & strict? & strict & strict/\textit{soft} \\ 
\hline 
\textbf{occurrences} & soft & soft & strict & soft \\ 
\hline 
\textbf{constraints on negative itemsets} & not too infrequent ($supp \leqslant less\_freq $) & frequent items & positive partner is frequent & frequent items, \textit{bounded size} \\ 
\hline 
\textbf{global constraints on patterns} & &  & positive part is frequent (second greater threshold) &\textit{maxspan}, \textit{maxgap}\\ 
\hline 
\end{tabularx}
\end{table*}

To conclude this section on formal aspects of negative pattern mining, we provide in Table \ref{NSP-comparison} a comparison of several negative sequential pattern mining approaches wrt several features investigated in this section.
%
%
It is also important to precise that not any semantics is ``more correct'' than another one. 
Its relevancy depends on the information the data scientists want to capture in its datasets, and the nature of the data at hand. 
In this work, one of our objective is to provide a sound and insightful framework about negative patterns to enable users to choose the tool to use and to make this choice according to the semantic of the negation they want to use. 
Execution time is obviously an important choice criteria but it must overtake by semantic choice to first provide interesting, intuitive and sound results.
%

\section{Algorithm \NegPSpan} \label{sec:censp}

In this section, we introduce algorithm \NegPSpan~ for mining NSPs from a sequence database under \emph{maxgap} and \emph{maxspan} constraints and under a weak absence semantics with $\nsubseteq\myeq\not\sqsubseteq$ for itemset inclusion.
As stated in proposition \ref{prop:sqsubset_eqembeddings}, no matter the embedding strategy, they are equivalent under strict itemset inclusion. 
Considering occurrences, \NegPSpan~ uses the soft-occurrence semantic: at least one occurrence of the negative pattern is sufficient to consider that it is supported by the sequence.

For computational reasons, we make an additional assumption on the admissible itemsets as negative itemsets. The negative itemsets are restricted to one element of some language $\mathcal{L}^-$ in order to cut the combinatorics of negative itemsets.  
In the algorithm \NegPSpan~ presented below, $\mathcal{L}^-=\left\{I=\{i_1,\dots,i_n\}| \forall k,\, supp(i_k)\geq \sigma\right\}$ denotes the set of itemsets that can be built from frequent items. 
But this set could also be user defined when the user is interested in some specific sets of non-occurring events.
For instance, $\mathcal{L}^-$ could be the set of frequent itemsets, which would be more restrictive than the set of itemsets made of frequent itemsets.

\begin{algorithm}[bp]
\footnotesize

\LinesNumbered

\SetKwInOut{Input}{input}
\SetKwData{break}{break}
\SetKwComment{Comment}{//}{}

\SetKwFunction{FRec}{\NegPSpan}
\SetKwFunction{match}{Match}
\SetKwFunction{output}{OutputPattern}
\SetKwProg{Fn}{Function}{:}{}
\SetKwFunction{PositiveComposition}{PositiveComposition}
\SetKwFunction{PositiveSequence}{PositiveSequence}
\SetKwFunction{NegativeSequence}{NegativeExtension}
\SetKwFunction{NegativeComposition}{NegativeComposition}

\Input{$\mathcal{S}$: set of sequences, $\seq{p}$: current pattern, $\sigma$: minimum support threshold,  $occs$: list of occurrences, $\mathcal{I}^f$: set of frequent items, $\theta$: maxgap, $\tau$: maxspan}
	\BlankLine
    \Fn{\FRec{$\mathcal{S}$, $\sigma$, $\seq{p}$, $occs$, $\mathcal{I}^f$, $\theta$, $\tau$}}{
		\Comment{Support evaluation of pattern $\seq{p}$}
        \If{$|occs|\geq \sigma$} {
        	\output{$p$, $occs$}\;
        }\Else {
        	\KwRet\;
        }
        
        \Comment{Positive itemset composition}
        \PositiveComposition{$\mathcal{S}$, $\sigma$, $\seq{p}$, $occs$, $\mathcal{I}^f$, $\theta$, $\tau$}\;
        
        \Comment{Positive sequential extension}
        \PositiveSequence{$\mathcal{S}$, $\sigma$, $\seq{p}$, $occs$, $\mathcal{I}^f$, $\theta$, $\tau$}\;
        
        \If{$|\seq{p}| \geq 2$ and $|\seq{p}_{|p|}|=1$} {
        	\Comment{Negative sequential extension}
			\NegativeSequence{$\mathcal{S}$, $\sigma$, $\seq{p}$, $occs$, $\mathcal{I}^f$, $\theta$, $\tau$}\;
        }
  	}  	
\caption{\NegPSpan: recursive function for negative sequential pattern extraction}
\label{algo:CeNSP-Rec}
\end{algorithm}

\subsection{Main Algorithm}
\NegPSpan~ is based on algorithm PrefixSpan \cite{pei2004mining:prefixspan} which implements a depth first search and uses the principle of database projection to reduce the number of sequence scans. 
\NegPSpan~ adapts the pseudo-projection principle of PrefixSpan which uses a projection pointer to avoid copying the data.
For \NegPSpan, a projection pointer of some pattern $\seq{p}$ is a triple $\langle sid, ppred, pos \rangle$ where 
$sid$ is a sequence identifier in the database, 
$pos$ is the position in sequence $sid$ that matches the last itemset of the pattern (necessarily positive)
and $ppred$ is the position of the previous positive pattern.

Algorithm \ref{algo:CeNSP-Rec} details the main recursive function of \NegPSpan~ for extending a current pattern $\seq{p}$. The principle of this function is similar to PrefixSpan. Every pattern $\seq{p}$ is associated with a pseudo-projected database represented by both the original set of sequences $\mathcal{S}$ and a set of projection pointers $occs$.
First, the function evaluates the size of $occs$ to determine whether pattern $p$ is frequent or not. If so, it is outputted, otherwise, the recursion is stopped because no larger patterns are possible (anti-monotonicity property).

Then, the function tries three types of pattern extensions of pattern $\seq{p}$ into a pattern $\seq{p}'$:
\begin{itemize}
\item the positive sequence composition ($\leadsto_c$) consists in adding  one item to the last itemset of $\seq{p}$ (following the notations of Definition \ref{def:partialorder}, the extension corresponds to the case of $\seq{p}'$ is the extension of $\seq{p}$ where $k'=k$, $\forall i\in[k-1],\, q_i=q'_i$ and $|p'_{k}|=|p_{k}|+1$),
\item the positive sequence extension ($\leadsto_s$) consists in adding a new positive singleton itemset at the end of $\seq{p}$ ($k'=k+1$, $\forall i\in[k-1],\, q_i=q'_i$ and $|p'_{k'}|=1$),
\item the negative sequence extension ($\leadsto_n$) consists in inserting a negative itemset between the positive penultimate itemset of $\seq{p}$ and the last positive itemset of $\seq{p}$ ($k'=k$, $\forall i\in[k-2],\, q_i=q'_i$, $|q'_{k-1}|=|q_{k-1}|+1$ and $p'_{k}=p_{k}$). In addition, NSP are negatively extended iff $|p_{k}|=1$ to prevent from redundant pattern generation (see section \ref{sec:redundancy}).
\end{itemize}

The negative pattern extension is specific to our algorithm and is detailed in the next section.
The first two extensions are identical to PrefixSpan pattern extensions, including their gap constraints management, \ie \emph{maxgap} and \emph {maxspan} constraints between positive patterns.

\begin{proposition}[]
\label{prop:complet_correct}
The proposed algorithm is correct and complete.
\end{proposition}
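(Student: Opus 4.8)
The proposition packages two claims: \textbf{soundness}, that every pattern \NegPSpan\ reports is a genuinely frequent NSP, and \textbf{completeness}, that every frequent NSP is eventually reported. Both are relative to the search space fixed in this section: NSPs obeying the syntactic restrictions of Definition~\ref{def:negativepattern} whose negative itemsets lie in $\mathcal{L}^-$, evaluated under soft-occurrences, $\nsubseteq\myeq\not\sqsubseteq$, and the \emph{maxgap}/\emph{maxspan} bounds. The plan is to treat the two halves independently, leaning on the anti-monotonicity already granted by Proposition~\ref{prop:antimonotonic} and on the strict/soft embedding equivalence of Proposition~\ref{prop:sqsubset_eqembeddings}, so that the embedding bookkeeping need only be done once.

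For soundness I would maintain an \emph{occurrence invariant} proved by induction on the recursion: at every call of \NegPSpan\ on a pattern $\seq{p}$ with projection list $occs$, the pointers in $occs$ record, for each sequence supporting $\seq{p}$, the frontier positions from which all \emph{maxgap}/\emph{maxspan}-admissible embeddings can be continued, so that the number of distinct supporting sequences is read off correctly. The base case is the seed pattern; the inductive step checks the three extension routines separately. Positive composition ($\leadsto_c$) keeps the pointers whose current itemset also contains the new item; positive extension ($\leadsto_s$) advances each pointer, as in PrefixSpan pseudo-projection, to the admissible itemsets containing the new singleton within the gap/span window; and negative extension ($\leadsto_n$) keeps exactly the pointers for which no itemset strictly between the penultimate and last positive positions---the interval $[ppred+1,pos-1]$ recorded in the pointer---includes the negative item, which is precisely the $\not\sqsubseteq$ condition of Definition~\ref{def:NSP_embedding} (soft and strict agreeing by Proposition~\ref{prop:sqsubset_eqembeddings}). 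Granting the invariant, the test $|occs|\ge\sigma$ decides frequency correctly, so every reported pattern is frequent.

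For completeness I would show that the three operations, applied in a \emph{canonical order}, generate every target $\seq{p}^\star=\langle p_1\ \neg q_1\ \dots\ p_k\rangle$ of the search space, and that this generation path is never pruned. The canonical order builds $\seq{p}^\star$ left to right: each positive itemset is first introduced as a singleton by $\leadsto_s$; the items of the negative itemset immediately preceding it are then inserted by repeated $\leadsto_n$ (legal precisely because the last positive itemset is still a singleton, meeting the guard $|p_k|=1$); and only afterwards is that positive itemset grown to its full content by $\leadsto_c$. I would verify that each step $\seq{p}\leadsto\seq{p}'$ of this path makes $\seq{p}$ a predecessor of $\seq{p}'$ for the order $\lhd$ of Definition~\ref{def:partialorder}, so that by transitivity every ancestor $\seq{a}$ of $\seq{p}^\star$ satisfies $\seq{a}\lhd\seq{p}^\star$. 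Anti-monotonicity (Proposition~\ref{prop:antimonotonic}) then yields $supp(\seq{a})\ge supp(\seq{p}^\star)\ge\sigma$ for every ancestor, so the frequency test passes along the whole path and $\seq{p}^\star$ is reached and reported.

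The crux, and the step I expect to be hardest, is establishing the compatibility between the operational extensions and the partial order $\lhd$ invoked for pruning: one must check $\seq{p}\lhd\seq{p}'$ case by case in Definition~\ref{def:partialorder} for $\leadsto_c$, $\leadsto_s$ and $\leadsto_n$ (the appending step $\leadsto_s$, which leaves the former last positive itemset unchanged, deserves particular care against the non-reflexivity clause), and simultaneously argue that the guard $|p_k|=1$ together with the fixed ordering loses no target while forcing a unique construction order---the latter giving non-redundancy, cross-referenced to the redundancy analysis. A secondary point to pin down in the soundness step is that $|occs|$ must be interpreted as the count of distinct supporting sequences, not of raw embeddings, so that the threshold $\sigma$ is applied to sequence support as intended.
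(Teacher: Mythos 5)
Your proposal is correct and follows essentially the same route as the paper: correctness via the support test (you add an explicit occurrence invariant on the projection pointers, which the paper glosses over), and completeness via the unique canonical generation path through $\leadsto_s$, $\leadsto_n$, $\leadsto_c$ (the paper derives the same path backwards by peeling off the last item), combined with the compatibility of each extension with $\lhd$ and the anti-monotonicity of Proposition~\ref{prop:antimonotonic} to show the path is never pruned. The concerns you flag at the end (the $\leadsto$-vs-$\lhd$ compatibility and the non-reflexivity clause) are exactly the points the paper asserts as ``quite obvious'' without checking, so your sketch is, if anything, more careful than the published proof.
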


Intuitively, the algorithm is complete considering that the three extensions enables to generate any NSP. 
For instance, pattern $\langle a\ \neg e\ b\ (ce)\ \neg (bd)\ a\rangle$ would be evaluated after evaluating the following patterns: $\langle a \rangle \leadsto_s \langle a\ b\rangle \leadsto_n \langle a\ \neg e\ b\rangle \leadsto_s \langle a\ \neg e\ b\ c\rangle \leadsto_c \langle a\ \neg e\ b\ (ce)\rangle \leadsto_s \langle a\ \neg e\ b\ (ce)\ a\rangle \leadsto_n \langle a\ \neg e\ b\ (ce)\ \neg b\ a\rangle \leadsto_n \langle a\ \neg e\ b\ (ce)\ \neg (bd)\ a\rangle$. 
Secondly, according to proposition~\ref{prop:antimonotonic}, the pruning strategy is correct. 

\subsection{Extension of Patterns with Negated Itemsets}

\begin{algorithm}[tbp]
\footnotesize

\LinesNumbered

\SetKwInOut{Input}{input}
\SetKwData{break}{break}
\SetKwData{continue}{continue}
\SetKwComment{Comment}{//}{}

\SetKwFunction{FRec}{\NegPSpan}
\SetKwFunction{Fonction}{NegativeExtension}
\SetKwFunction{match}{Match}
\SetKwProg{Fn}{Function}{:}{}

\Input{$\mathcal{S}$: set of sequences, $\seq{p}$: current pattern, $\sigma$: minimum support threshold,  $occs$: list of occurrences, $\mathcal{I}^f$: set of frequent items, $\theta$: maxgap, $\tau$: maxspan}
	\BlankLine
    \Fn{\Fonction{$\mathcal{S}$, $\sigma$, $\seq{p}$, $occs$, $\mathcal{I}^f$, $\theta$, $\tau$}}{
        	
        \For{$it \in \mathcal{I}^f$ }{
        	
        	\If{$\seq{p}[-2]$ is pos} {
        		\Comment{Insert the negative item at the penultimate position}
        		$\seq{p}.insert(\neg it)$\; 
        	} \Else {
        		\If{$it>\seq{p}[-2].back()$} {
        			\Comment{Insert an item to the penultimate (negative) itemset}
        			$\seq{p}[-2].insert(\neg it)$\;
        		} \Else {
        			\continue\;
        		}
        	}
        		$newoccs \gets \emptyset$\;
        		\For{$occ \in occs$}{
        			$found \gets false$\;
   		     		\For{$ sp = [occ.pred+1,occ.pos-1]$} {
      	  				\If{$it \in \seq{s}_{occ.sid}[sp]$}{
        					$found \gets true$\;
        					\break\;
        				}
        			}
        			\If{!$found$} {
        				$newoccs \gets newoccs \cup \{occ\}$\;
        			} \Else {
        				\Comment{Look for an alternative occurrence}
        				$newoccs \gets newoccs \cup $ \match{$s_{sid}$, $\seq{p}$, $\theta$, $\tau$}\;
        			}
        		}
        		
        		\FRec{$\mathcal{D}$, $\sigma$, $\seq{p}$, $newoccs$, $\mathcal{I}^f$}\;
        		$\seq{p}[-2].pop()$\;
       	 	}
    }
\caption{\NegPSpan: negative extensions}
\label{algo:NegExt}
\end{algorithm}

Algorithm \ref{algo:NegExt} extends the current pattern $\seq{p}$ with negative items. It generates new candidates by inserting an item $it \in \mathcal{I}^f$, the set of frequent items. Let $\seq{p}[-2]$ and $\seq{p}[-1]$ denote respectively the penultimate itemset and the last itemset of $\seq{p}$. If $\seq{p}[-2]$ is positive, then a new negated itemset is inserted between $\seq{p}[-2]$ and $\seq{p}[-1]$. Otherwise, if $\seq{p}[-2]$ is negative, item $it$ is added to $\seq{p}[-2]$. To prevent redondant enumeration of negative itemsets, only items $it$ (lexicographically) greater than the last item of $\seq{p}[-2]$ can be added.

Then, lines 10 to 20, evaluate the candidate by computing the pseudo-projection of the current database. According to the selected semantics associated with $\not\sqsubseteq$, \ie total non inclusion (see Definition \ref{def:NSP_embedding}), it is sufficient to check the absence of $it$ in the subsequence included between the occurrences of positive itemsets surounding $it$. 
To achieve this, the algorithm checks the sequence positions in the interval $[occ.ppred+1, occ.pos-1]$.
If $it$ does not occur in itemsets from this interval, then the extended pattern occurs in the sequence $occ.sid$.
Otherwise, to ensure the completeness of the algorithm, another occurrence of the pattern has to be searched in the sequence (\cf \FuncSty{Match} function that takes into account gap constraints).

For example, the first occurrence of pattern $\seq{p}=\langle abc\rangle$ in sequence $\langle abecabc \rangle$ is $occ_p=\langle sid,2,4\rangle$. Let's now consider $\seq{p}'=\langle ab\neg ec\rangle$, a negative extension of $p$. The extension of the projection-pointer $occ_p$ does not satisfy the absence of $e$. So a new occurrence of $p$ has to be searched for. $\langle sid,6,7\rangle$, the next occurrence of $\seq{p}$, satisfies the negative constraint.
Then, \NegPSpan~ is called recursively for extending the new current pattern $\langle ab\neg ec\rangle$.

We can note that the gap constraints $\tau$ and $\theta$ does not explicitly appear in this algorithm (except while a complete matching is required), but it impact indirectly the algorithm by narrowing the possible interval of line 13.

\subsubsection{Extracting NSP without surrounding negations}\label{sec:nonsurrounding}

An option restricts negated item to be not surrounded by itemsets containing this item. 
This alternative is motivated by the objective to simplify pattern understanding. 
A pattern $\langle a\neg b b c \rangle$ may be interpreted as ``there is exactly one occurrence of $b$ between $a$ and $c$''.
But, this may also lead to redundant patterns: $\langle a b \neg b c\rangle$ matches exactly the same sequences than $\langle a\neg b b c \rangle$ (see section \ref{sec:redundancy}). 
This second restriction can be disabled in our algorithm implementation. If so and for sake of simplicity, we preferred to yield the pattern $\langle a b \neg b c\rangle$. 

The set of such NSP can be extracted using the same algorithm, simply changing the candidate generation in Algorithm \ref{algo:NegExt}, line 2 by $it \in \mathcal{I}^f\setminus ( \seq{p}[-1] \cup \seq{p}[-2])$. Items to add to a negative itemset are among frequent items except surrounding items.

\subsubsection[Extracting NSP with partial non inclusion]{Extracting NSP with partial non inclusion ($\not\subseteq\myeq \not\preceq$)}

Algorithm \ref{algo:NegExt_PartialNonInclusion} present the variant of the negative extension algorithm while the partial non-inclusion is used ($\not\subseteq\myeq \not\preceq$). 
The backbone of the algorithm is similar: a candidate pattern with a negated itemset at the penultimate position is generated and it assesses whether this candidate is frequent or not. It is done by checking the absence of the itemset $is$ in the itemsets of the sequence at positions defined by the occurrence. 
The test of line 7 assesses that it is false that $is \not\preceq \seq{s}_{occ.sid}[sp]$: $is$ is not partial non-included in one the itemset of the sequence iff $is$ is a subset of it.

On the contrary to the previous approach, candidate patterns are generated based on $\mathcal{L}^-$ the list of itemsets. It is not possible to build itemsets from the list of items because, using this non-inclusion semantic, the support is monotonic (and not anti-monotonic). 
The combinatorics of this variant is thus significantly higher in practice because all element of $\mathcal{L}^-$ would be evaluated.

\begin{algorithm}[tbp]
\footnotesize

\LinesNumbered

\SetKwInOut{Input}{input}
\SetKwData{break}{break}
\SetKwData{continue}{continue}
\SetKwComment{Comment}{//}{}

\SetKwFunction{FRec}{\NegPSpan}
\SetKwFunction{Fonction}{NegativeExtension}
\SetKwFunction{match}{Match}
\SetKwProg{Fn}{Function}{:}{}

\Input{$\mathcal{S}$: set of sequences, $\seq{p}$: current pattern, $\sigma$: minimum support threshold,  $occs$: list of occurrences, $\mathcal{I}^f$: set of frequent items, $\theta$: maxgap, $\tau$: maxspan}
	\BlankLine
    \Fn{\Fonction{$\mathcal{S}$, $\sigma$, $\seq{p}$, $occs$, $\mathcal{I}^f$, $\theta$, $\tau$}}{
        	
        \For{$is \in \mathcal{L}^-$ }{
        		\Comment{Insert the negative itemset at the penultimate position}
        		$\seq{p}.insert(\neg is)$\; 
        		$newoccs \gets \emptyset$\;
        		\For{$occ \in occs$}{
        			$found \gets false$\;
   		     		\For{$ sp = [occ.pred+1,occ.pos-1]$} {
      	  				\If{$is\subseteq \seq{s}_{occ.sid}[sp]$}{
        					$found \gets true$\;
        					\break\;
        				}
        			}
        			\If{!$found$} {
        				$newoccs \gets newoccs \cup \{occ\}$\;
        			} \Else {
        				\Comment{Look for an alternative occurrence}
        				$newoccs \gets newoccs \cup $ \match{$s_{sid}$, $\seq{p}$, $\theta$, $\tau$}\;
        			}
        		}
        		
        		\FRec{$\mathcal{D}$, $\sigma$, $\seq{p}$, $newoccs$, $\mathcal{I}^f$}\;
        		$\seq{p}[-2]=\emptyset$\;
       	 	}
    }
\caption{\NegPSpan: negative extensions with partial non-inclusion (alternative to Algorithm \ref{algo:NegExt})}
\label{algo:NegExt_PartialNonInclusion}
\end{algorithm}

\subsection{Redundancy avoidance}\label{sec:redundancy}
The \NegPSpan algorithm is syntactically non-redundant but can in practice generates patterns that are semantically redundant. 

The semantic redundancy appears for pairs of patterns like $\langle a\ \neg b\ b\ c\rangle$ and $\langle a\ b\ \neg b\ c \rangle$: there are syntactically different but match the exact same set of sequences. Semantically, such pattern could be interpreted as ``\textit{there is not much than one occurrence of $b$ between $a$ and $c$}''.
For such patterns, it is possible to avoid generating both efficiently. 
Our solution is to not generate candidate patterns with negative items that are in the last itemset. Thus, only $\langle a\ b\ \neg b\ c \rangle$ would be generated.
In Algorithm \ref{algo:NegExt} line 2, the list of frequent items $\mathcal{I}^f$ is then replaced by $\mathcal{I}^f\setminus \seq{p}[−1]$. 
But, this modification makes loose the completeness of the algorithm. In fact, the pattern $\langle a\ \neg b\ b\rangle$ is not generated neither its semantically equivalent pattern $\langle a\ b\ \neg b\rangle$ because of the syntactic constraint on NSP that can not end with a negative itemset. 
In practice, we do not manage this kind of redundancy or prefer the sound and correct option of not surrounding negative itemsets (see section \ref{sec:nonsurrounding}).

A syntactic redundancy is introduced by adding the extension by negative items. For instance, the pattern $\langle a\ \neg b\ (cd)\rangle$ may be reached by two distinct paths $p_1: \langle a\ c\rangle \leadsto_c \langle a\ (cd)\rangle \leadsto_n \langle a\ \neg b\ (cd)\rangle $ or $p_2: \langle a\ c\rangle \leadsto_n \langle a\ \neg b\ c\rangle \leadsto_c \langle a\ \neg b\ (cd)\rangle $. 
To solve this problem, the algorithm first specifies the negative itemsets as a composition of negative items and then to compose the last itemset with new items. This discard the path $p_1$. In Algorithm \ref{algo:CeNSP-Rec}, line 8 enables negative extension only if the last (positive) itemset is of size 1.


\subsection{Execution Example}
This section illustrates the execution of the algorithm on a small example. 
Let us consider the dataset of sequences of Table \ref{tab:dataset_example} and the minimal support threshold $\sigma=2$.
In this example, we consider the following negative patterns semantic: total non-inclusion and strong absence.
No gap constraints are considered ($\theta=\infty$ and $\tau=\infty$).
Then, we have $\mathcal{L}=\{a,b,c,d,e\}$. The $f$ event occurs only once and are thus not frequent according to $\sigma$ value.

\begin{table}[h!]
\centering
\caption{Dataset of sequences used in the execution example.}
\label{tab:dataset_example}
\begin{tabular}{ll}
\hline
SID & Sequence\\
\hline
$\seq{s}_1$ & $\langle a\ c\ b\ e\ d\rangle$ \\
$\seq{s}_2$ & $\langle a\ (bc)\ e\rangle$ \\
$\seq{s}_3$ & $\langle a\ b\ e\ d\rangle$ \\
$\seq{s}_4$ & $\langle a\ e\ d\ f\rangle$ \\
\hline
\end{tabular}
\end{table}

Figure \ref{fig:execution_example} illustrates the execution of \NegPSpan\ algorithm on the dataset of Table \ref{tab:dataset_example} starting from pattern $\langle a\rangle$.
The tree illustrates successive patterns explored by the depth-first search strategy.
Each node detailed both the pattern and the corresponding projected database. For sake of space, the tree is simplified and some nodes are missing.

For patterns larger than two, projected sequences have two colors corresponding, in green, of the part of the sequence that can be used to make positive extensions and, in red, of the part of the sequence that is used to assess absence of items for negative extension.
Two markers locate positions of the projection pointer. The second pointer is the same as the one computed by \textsc{PrefixSpan}.

Let us consider projected sequences of pattern $\langle ae \rangle$. 
In the first sequence, $d$ is green as the part of the sequence ending the sequence after position of $e$. 
$cb$ are in read because this events are inbetween occurrence of $a$ and $e$.
This pattern can be extended in two ways:
\begin{itemize}
\item with negative items among $\mathcal{L}\setminus\{a,e\}$ ($a$ and $e$ are removed if the restriction on second restriction is activated),
\item with positive items among items that are frequent in the green parts.
\end{itemize}

Considering extension of pattern $\langle a\ e \rangle$ with a negative item, \eg $\neg c$, each sequence whose red part contains the item is discarded, the others remains identical. 
Extension by $\neg c$ leads to pattern $\langle a\neg c\ e \rangle$ only for sequences $s_3$ and $s_4$.

The extension of pattern $\langle a\ e \rangle$ by a positive item follows the same strategy as \textsc{PrefixSpan}. In this case, the algorithm only explore extension by $d$ item and projected pointers are updated to reduce further scanning. 

\espace

Adding a new negative item while the penultimate item is negative, append it in the negative itemset. 
In case of pattern $\langle a\neg c\ e \rangle$, $d$ is the only candidate because $e$ is one of the surrounding events and $b$ is above $c$ is the lexicograpĥic order. With the total non-inclusion, we again simply have to discard sequences that contain the item $d$ within their red part.\\
We can see in the case of the $\langle a\neg d\ e\ d\rangle$ extension that all combinatorics of itemsets may quickly satisfy all negation constraints. This suggests first to carefully select the appropriate $\mathcal{L}$ and second to use a maximum size for negative itemsets to avoid pattern explosion.

Finally, we also notice that extensions with negative items are not terminal recusive steps.
Once negative items have been inserted, new positive items can be append to the pattern.
We encounter this case with pattern $\langle a\neg d\ e \rangle$ which is extended by pattern $d$.



\begin{figure}[tbp]
\centering
\includegraphics[width=\textwidth]{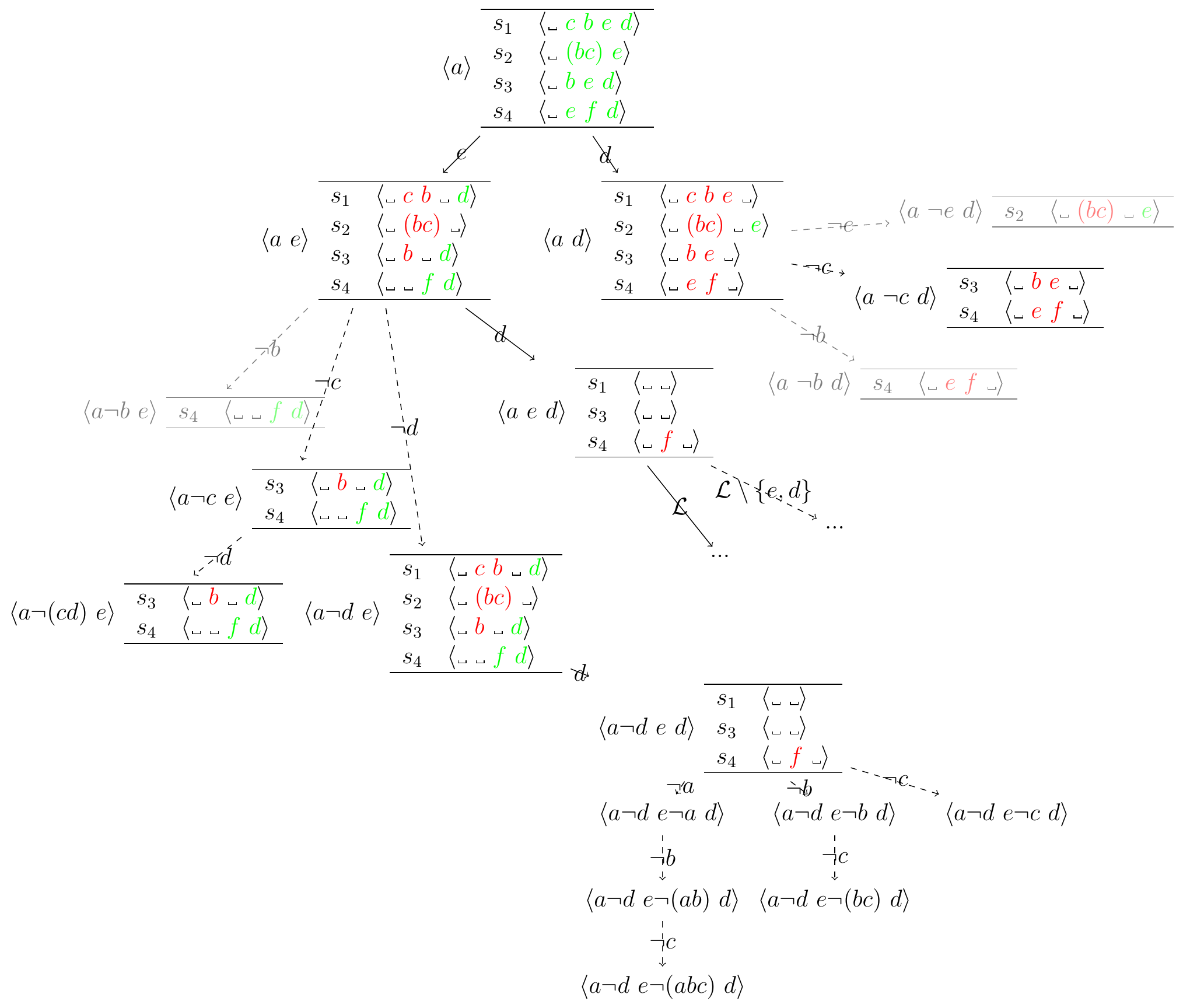}
\caption{Example of the tree search of the \NegPSpan\ algorithm of dataset of Table \ref{tab:dataset_example}.
Each node tree represents the pattern on the left and the projected database on the right. Itemsets are in red when the are used to assess future negations while green itemsets are used for sequential ($\leadsto_s$).
Dashed arrows represent negative extensions ($\leadsto_n$) while plain arrows are sequential ($\leadsto_s$) or compositional extensions ($\leadsto_c$). Arrow label holds the item that is used in the extension.
}
\label{fig:execution_example}
\end{figure}

\section{Experiments}\label{sec:expes}
This section presents experiments on synthetic and real data. Experiments on synthetic data aims at exploring and comparing \NegPSpan\ and eNSP for negative sequential pattern mining.
The other experiments were conducted on medical care pathways and illustrates results for negative patterns.
\NegPSpan\ and eNSP have been implemented in C++. 
We pay attention on the most significant results. More detailed results can be found in a compagnon website.\footnote{Code, data generator and synthetic benchmark datasets can be downloaded here: \url{http://people.irisa.fr/Thomas.Guyet/negativepatterns/}.
}

\subsection{Benchmark}
This section presents experiments on synthetically generated data. The principle of our sequence generator is the following: generate random negative patterns and hide or not some of their occurrences inside randomly generated sequences.
The main parameters are the total number of sequences ($n$, default value is $n = 500$), the mean length of sequences ($l = 20$), the number of different items ($d = 20$), the total number of patterns to hide ($3$), their mean length ($4$) and the minimum occurrence frequency of patterns in the dataset ($10\%$). 

Generated sequences are sequences of items (not itemsets). 
For such kind of sequences, patterns extracted by eNSP hold only items because positive partners have to be frequent. For a fair evaluation and preventing \NegPSpan\ from generating more patterns, we restricted $\mathcal{L}^-$ to the set of frequent items. 
For both approaches, we limit the pattern length to 5 items.

Figure \ref{fig:mg_cmp} illustrates the computation time and number of patterns extracted by eNSP and \NegPSpan\ on sequences of length 20 and 30, under three minimal thresholds ($\sigma=10\%$, $15\%$ and $20\%$) and with different values for the maxgap constraint ($\tau=4$, $7$, $10$ and $\infty$).
For eNSP, the minimal support of positive partners, denoted $\varsigma$, is set to 70\% of the minimal threshold $\sigma$. 
Each boxplot has been obtained with a 20 different sequence datasets.
Each run has a timeout of 5 minutes. 

\begin{figure*}[t]
\centering
\includegraphics[width=.44\textwidth]{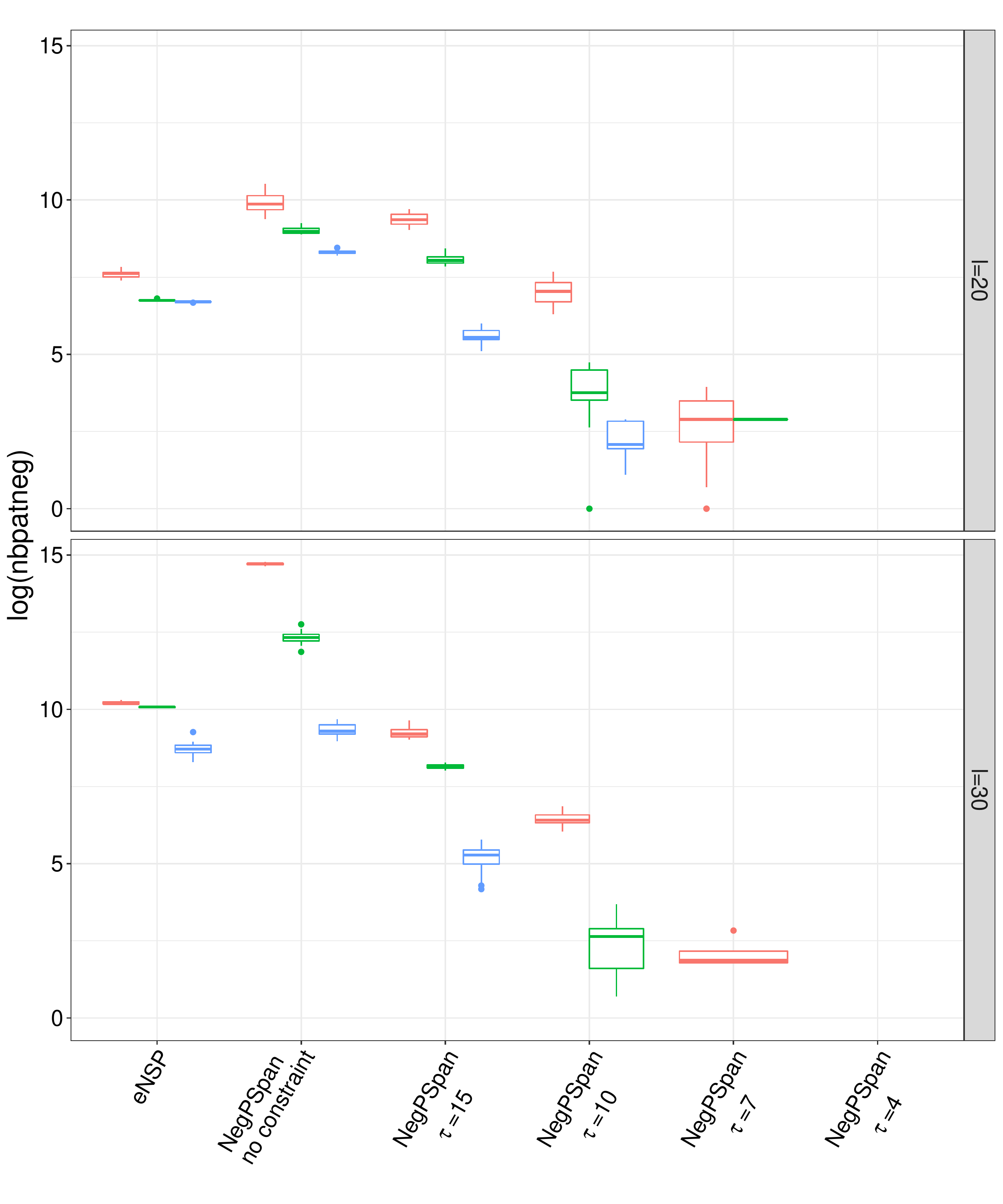} \hspace{0.1cm}
\includegraphics[width=.44\textwidth]{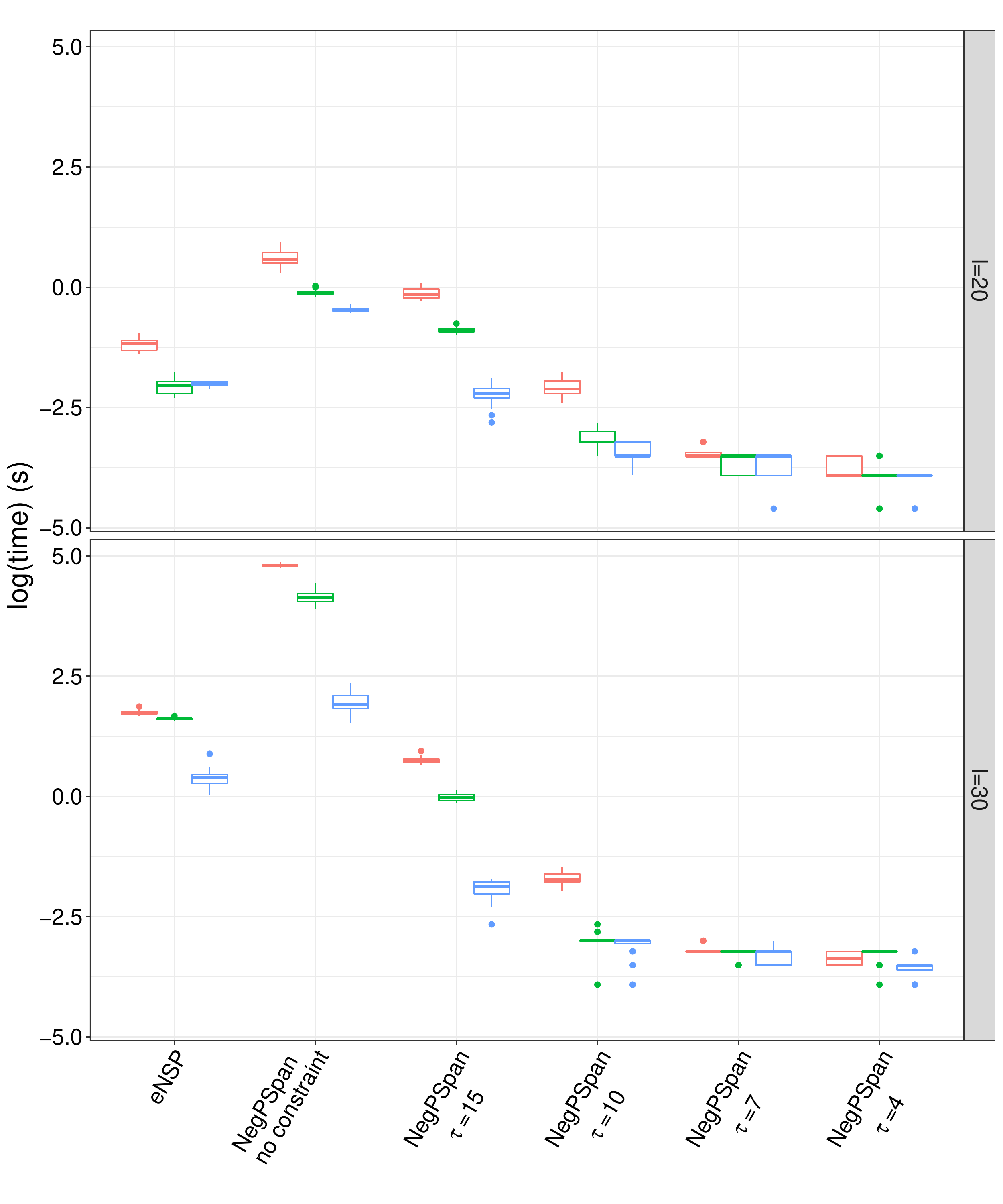} \hspace{0.1cm}
\includegraphics[width=.06\textwidth]{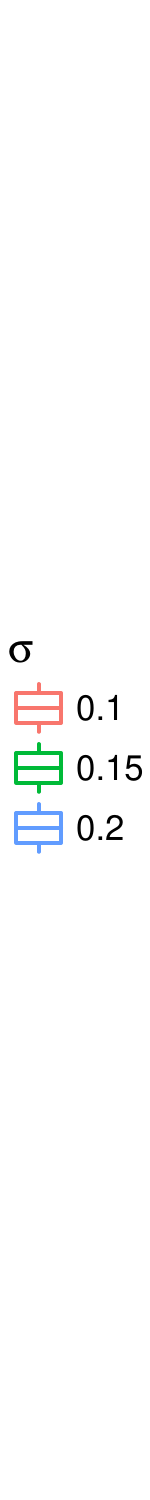}
\caption{Comparison of number of patterns (left) and computing time (right) between eNSP and \NegPSpan, with different values for maxgap ($\tau$). Top (resp. bottom) figures correspond to database with mean sequence length equal to $20$ (resp. $30$). Boxplot colors correspond to different values of $\sigma$ ($10\%$, $15\%$ and $20\%$).}
\label{fig:mg_cmp}
\end{figure*}

The main conclusion from Figure \ref{fig:mg_cmp} is that \NegPSpan\ is more efficient than eNSP when maxgap constraints are used.
As expected, eNSP is more efficient than \NegPSpan\ without any maxgap constraint.
This is mainly due to the number of extracted patterns. \NegPSpan\ extracts significantly more patterns than eNSP because of different choices for the semantics of NSPs.
First, eNSP uses a stronger negation semantics. Without maxgap constraints, the set of patterns extracted by \NegPSpan\ is a superset of those extracted by eNSP (see proof in Appendix \ref{sec:proofs_superset}). 

An interesting result is that, for reasonably long sequences ($20$ or $30$), even a weak maxgap constraint ($\tau=10$) significantly reduces the number of patterns and makes \NegPSpan\ more efficient.
$\tau=10$ is said to be a \textit{weak} constraint because it does not cut early the search of a next occurring item compare to the length of the sequence ($20$ or $30$).
This is of particular interest because the maxgap is a quite natural constraint when mining long sequences. It prevents from taking into account long distance correlations that are more likely irrelevant.
Another interesting question raised by this results is the real meaning of extracted patterns by eNSP.
In fact, under low frequency thresholds, it extracts numerous patterns that are not frequent when weak maxgap constraints are considered.
As a consequence, the significance of most of the patterns extracted by eNSP seems poor while processing ``long'' sequences datasets.

Figure \ref{fig:mg_cmp} also illustrates classical results encountered with sequential pattern mining algorithms. We can note that, for both algorithms, the number of patterns and runtime increase exponentially as the minimum support decreases. Also, the number of patterns and the runtime increase notably with sequence length.

\begin{figure*}[t]
\centering
\includegraphics[width=.45\textwidth]{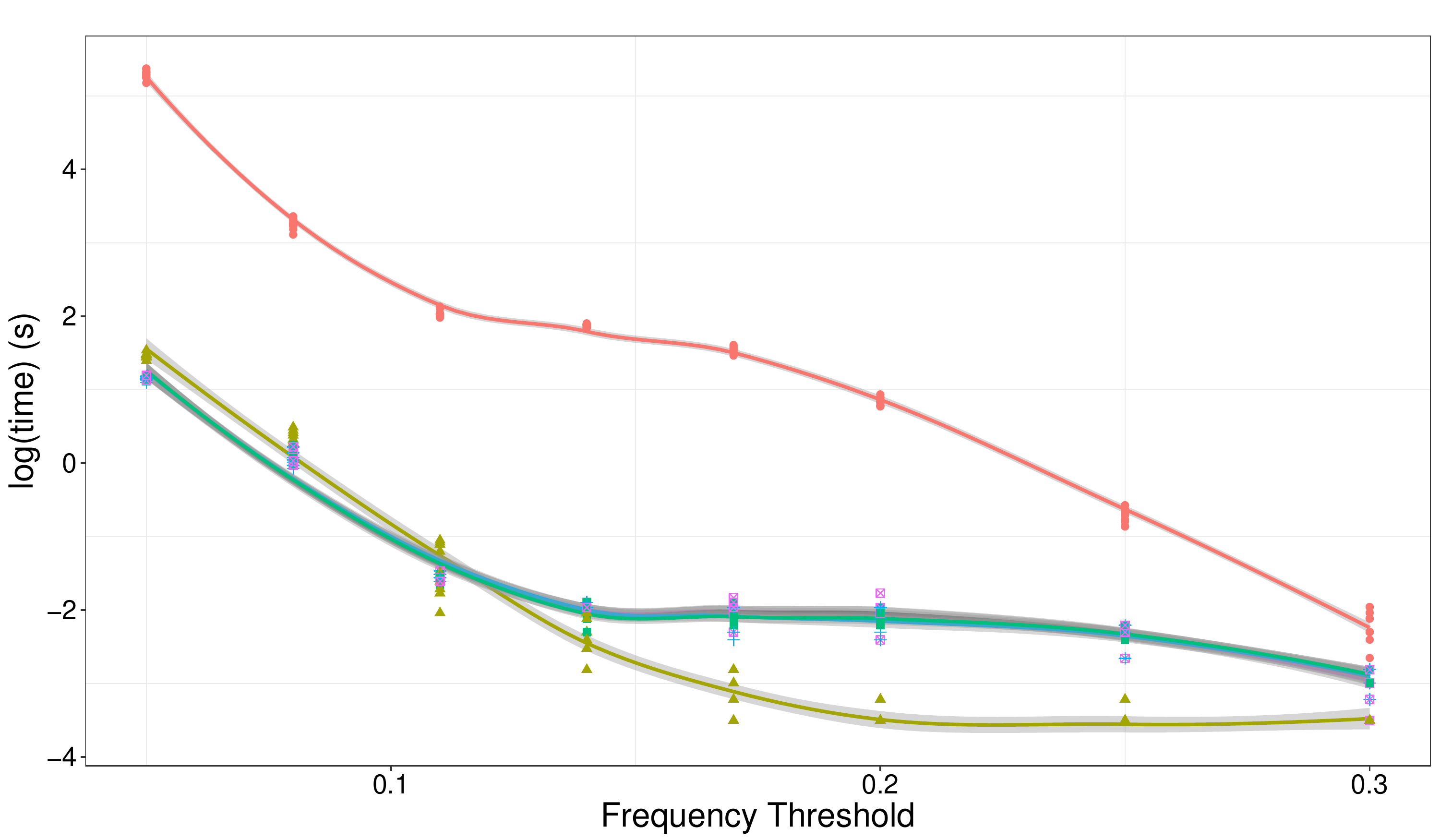}
\includegraphics[width=.45\textwidth]{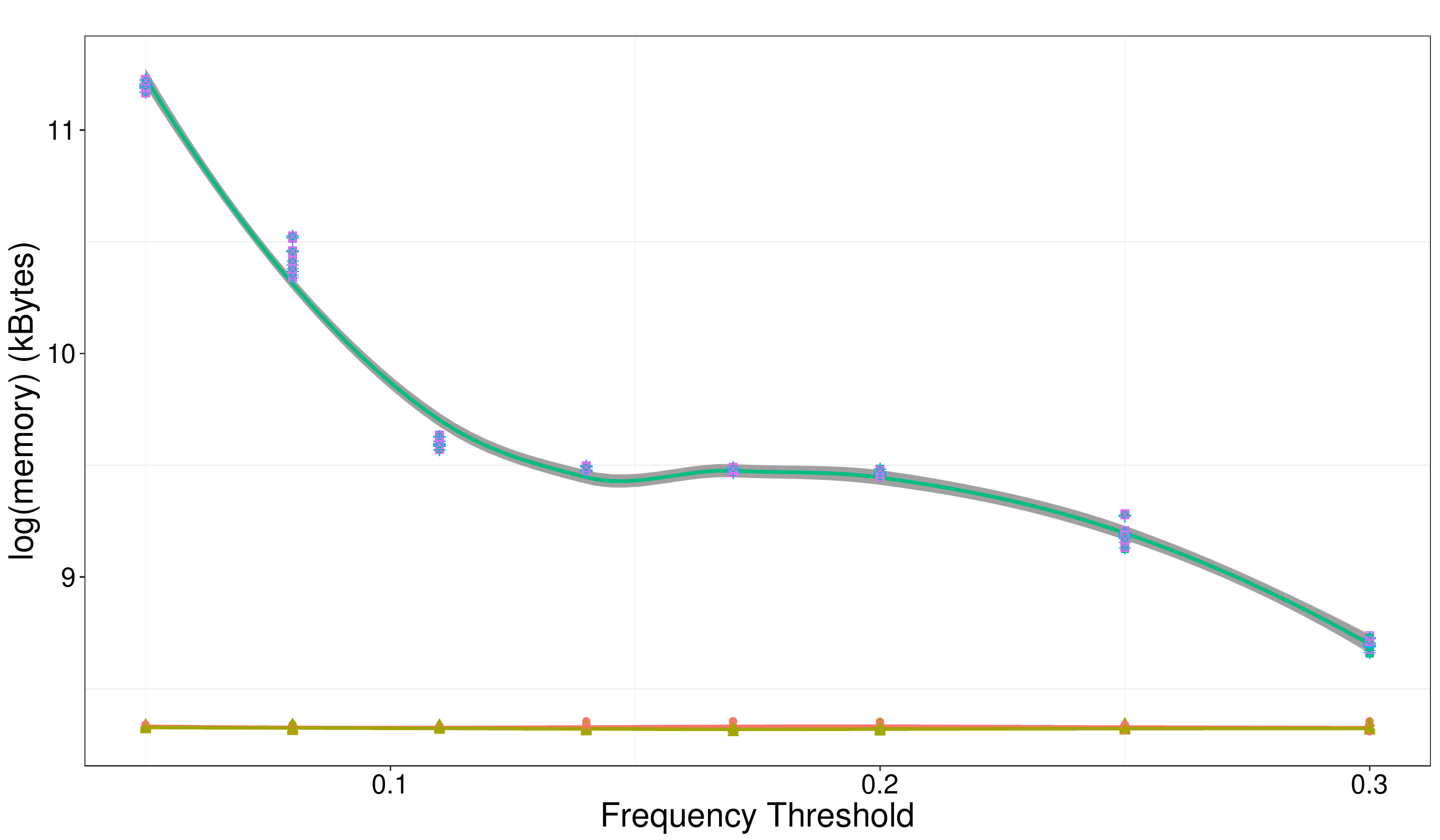}\\
\includegraphics[width=.45\textwidth]{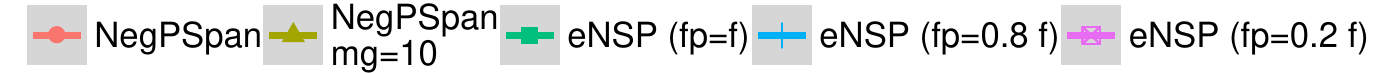}
\caption{Comparison of computing time (left) and memory consumption (right) between eNSP and \NegPSpan\ wrt minimal support.}
\label{fig:th_cmp}
\end{figure*}

\espace

Figure \ref{fig:th_cmp} illustrates computation time and memory consumption with respect to minimum threshold for different settings: eNSP is ran with different values for $\varsigma$, the minimal frequency of the positive partner of negative patterns ($100\%$, $80\%$ and $20\%$ of the minimal frequency threshold) and \NegPSpan\ is ran with a maxgap of $10$ or without.
Computation times show similar results as in previous experiments: \NegPSpan\ becomes as efficient as eNSP with a (weak) maxgap constraint.
We can also notice that the minimal frequency of the positive partners does not impact eNSP computing times neither memory requirements.

The main result illustrated by this Figure is that \NegPSpan\ consumes significantly less memory than eNSP. This comes from the depth-first search strategy which prevents from memorizing many patterns. 
On the opposite, eNSP requires to keep in memory all frequent positive patterns and their occurrence list.
The lower the threshold is, the more memory is required.
This strategy appears to be practically intractable for large/dense databases.

\subsection{Experiments on Real Datasets}
This section presents experiments on the real datasets from the SPMF repository.\footnote{\url{http://www.philippe-fournier-viger.com/spmf/index.php?link=datasets.php}} These datasets consist of click-streams or texts represented as sequences of items. Datasets features and results are reported in Table \ref{tab:realdatasets}. 
For every dataset, we have computed the negative sequential patterns with a maximum length of $l=5$ items and a minimal frequency threshold set to $\sigma=5\%$. \NegPSpan\ is set with a maxgap $\tau=10$ and eNSP is set up with $\varsigma=.7\sigma$. For each dataset, we provide the computation time, the memory consumption and the numbers of positive and negative extracted patterns.
Note that the numbers of positive patterns for eNSP are given for $\varsigma$ threshold, \ie the support threshold for positive partners used to generate negative patterns.

For the \textit{sign} dataset, the execution has been stopped after 10 mn to avoid running out of memory. 
The number of positive patterns extracted by eNSP considering the $\sigma$ threshold is not equal to \NegPSpan\ simply because of the maxgap constraint.
 
The results presented in Table \ref{tab:realdatasets} confirm the results from experiments on synthetic datasets. 
First, it highlights that \NegPSpan\ requires significant less memory for mining every dataset. 
Second, \NegPSpan\ outperforms eNSP for datasets having a long mean sequence length (\textit{Sign}, \textit{Leviathan}, and \textit{MSNBC}).
In case of the \textit{Bible} dataset, the number of extracted patterns by eNSP is very low compared to \NegPSpan\ due to the constraint on minimal frequency of positive partners.

\begin{table}[tb]

\footnotesize
\caption{Results on real datasets with setting $\sigma=5\%$, $l=5$, $\tau=10$, $\varsigma=.7 \sigma$. Bold faces highlight lowest computation times or memory consumptions.}
\label{tab:realdatasets}
~\hspace{-1.8cm}
\begin{tabular}{l|rrr|rrrr|rrrr|}
                      &  \multicolumn{3}{c|}{Dataset}  & \multicolumn{4}{c|}{\NegPSpan} & \multicolumn{4}{c|}{eNSP} \\ \hline
 & \multicolumn{1}{c}{$|\mathcal{D}|$} & \multicolumn{1}{c}{$|\mathcal{I}|$} & \multicolumn{1}{c|}{length} & \multicolumn{1}{c}{time ($s$)} & \multicolumn{1}{c}{mem ($kb$)} & \multicolumn{1}{c}{\#pos} & \multicolumn{1}{c|}{\#neg} & \multicolumn{1}{c}{time ($s$)} & \multicolumn{1}{c}{mem ($kb$)} & \multicolumn{1}{c}{\#pos} & \multicolumn{1}{c|}{\#neg}  \\ \hline
\textit{Sign}	& 730 &	267 &	51.99 &	\textbf{15.51} &	\textbf{6,220} &	348 &	1,357,278 &	349.84 (!) &	13,901,600 &	1,190,642 &	1,257,177  \\
\textit{Leviathan}	& 5,834	& 9,025	& 33.81	& \textbf{6.07}	& \textbf{19,932}	& 110	& 39797	& 28.43	& 428,916	& 7,691	& 17,220 \\
\textit{Bible}	& 36,369	& 13,905	& 21.64	& 38.82	& \textbf{68,944}	& 102	& 43,701	& \textbf{27.38}	& 552,288	& 1,364	& 2,621 \\
\textit{BMS1}	& 59,601	& 497	& 2.51	& \textbf{0.16}	& \textbf{22,676}	& 5	& 0	& 0.18	& 34,272	& 8	& 7\\
\textit{BMS2}	& 77,512	& 3,340	& 4.62	& 0.37	& \textbf{39,704}	& 1	& 0	& \textbf{0.35}	& 53,608	& 3	& 2\\
\textit{kosarak25k}	& 25,000	& 14804	& 8.04	& 0.92	& \textbf{24,424}	& 23	& 409	& \textbf{0.53}	& 43,124	& 50	& 51\\
\textit{MSNBC}	& 31,790	& 17	& 13.33	& \textbf{40.97}	& \textbf{41,560}	& 613	& 56,418	& 41.44	& 808,744	& 2,441	& 5,439\\ \hline
\end{tabular}
\end{table}

\subsection{Case Study: Care Pathway Analysis}
This section presents the use of NSPs for analyzing  epileptic patient care pathways. Recent studies suggest that medication changes may be associated with epileptic seizures for patients with long term treatment with anti-epileptic (AE) medication \cite{polard2015brand}. 
NSP mining algorithms are used to extract patterns of drugs deliveries that may inform about the suppression of a drug from a patient treatment.
In \cite{DauxaisAIME17}, we studied discriminant temporal patterns but it does not explicitly extract the information about medication absence as a possible explanation of epiletic seizures. 

Our dataset was obtained from the french insurance database \cite{Moulis2015411} called SNIIRAM. 8,379 epileptic patients were identified by their hospitalization
identified by their hospitalization related to an epileptic event.
For each patient, the sequence of drugs deliveries within the $90$ days before the epileptic event was obtained from the SNIIRAM. For each drug delivery, the event id is a tuple $\langle m, grp, g\rangle$ where $m$ is the ATC code of the active molecule, $g\in\{0,1\}$ is the brand-name ($0$) vs generic ($1$) status of the drug and $grp$ is the speciality group. The speciality group identifies the drug presentation (international non-proprietary name, strength per unit, number of units per pack and dosage form).
The dataset contains 251,872 events over 7,180 different drugs. The mean length of a sequence is 7.89$\pm$8.44 itemsets. Length variance is high due to the heterogenous nature of care pathways. Some of them represent complex therapies involving the consumption of many different drugs while others are simple case consisting of few deliveries of anti-epileptic drugs.

\espace

Let first compare results obtained by eNSP and \NegPSpan\ to illustrate the differences in the patterns sets extracted by each algorithm. To this end, we set up the algorithms with $\sigma=14.3\%$ ($1,200$ sequences), a maximum pattern length of $l=3$, $\tau=3$ for \NegPSpan\ and $\varsigma=.1\times\sigma$ the minimal support for positive partners for eNSP. eNSP extracts 1,120 patterns and \NegPSpan\ only 10 patterns (including positive and negative patterns). Due to a very low $\varsigma$ threshold, many positive patterns are extracted by eNSP leading to generate a lot of singleton negative patterns (\ie a pattern that hold a single negated item).

\begin{table}[tbh]
\caption{Patterns involving \textit{valproic acid} switches with their supports computed by eNSP and \NegPSpan.}
\small
\centering
\label{tab:valproicacid_comparison}
    \begin{tabular}{lcc}
        \hline
        pattern & \begin{tabular}[x]{@{}c@{}}support\\eNSP\end{tabular}  & \begin{tabular}[x]{@{}c@{}}support\\\NegPSpan\end{tabular}  \\ \hline
        $\seq{p}_1 = \langle 383\,\neg(86,383)\,383 \rangle$ & 1,579 & \\ 
        $\seq{p}_2 = \langle 383\,\neg 86\,383 \rangle$   & 1,251 & 1,243 \\ 
        $\seq{p}_3 = \langle 383\,\neg 112\,383 \rangle$  & 1,610 & \\ 
        $\seq{p}_4 = \langle 383\,\neg 114\,383 \rangle$  & 1,543 & 1,232\\
        $\seq{p}_5 = \langle 383\,\neg 115\,383 \rangle$  & 1,568 & 1,236\\
        $\seq{p}_6 = \langle 383\,\neg 151\,383 \rangle$  & 1,611 & \\
        $\seq{p}_7 = \langle 383\,\neg 158\,383 \rangle$  & 1,605 & \\
        $\seq{p}_8 = \langle 383\,\neg 7\,383 \rangle$    &      & 1,243\\
        \hline
    \end{tabular}
\end{table}

Precisely, we pay attention to the specific specialty of \textit{valproic acid} which exists in generic form (event $383$) or brand-named form (event $114$) by selecting patterns that start and finish with event $383$. The complete list of these patterns is given in Table \ref{tab:valproicacid_comparison}. 
Other events correspond to other anti-epileptic drugs 
 ($7$: \textit{levetiracetam}, $158$: \textit{phenobarbital})
or psycholeptic drugs 
 ($112$: \textit{zolpidem}, $115$: \textit{clobazam}, $151$: \textit{zopiclone})
except $86$ which is \textit{paracetamol}.

First, it is interesting to note that with this setting, the two algorithms share only 3 patterns $\seq{p}_2$, $\seq{p}_4$ and $\seq{p}_5$, which have lower support with \NegPSpan\ because of the maxgap constraint. This constraint also explains that pattern  $\seq{p}_3$ and $\seq{p}_6 $ are not extracted by \NegPSpan.
These patterns illustrate that in some cases, the patterns extracted by eNSP may not be really interesting because they involve distant events in the sequence.
Pattern $\seq{p}_{1}$ is not extracted by \NegPSpan\ due to the strict-embedding pattern semantics.
With eNSP semantics, $\seq{p}_{1}$ means that there is no delivery of \textit{paracetamol} and \textit{valproic acid} at the same time. With \NegPSpan\ semantics, $\seq{p}_{1}$ means that there is no delivery of \textit{paracetamol} neither \textit{valproic acid} between two deliveries of \textit{valproic acid}. The latter is stronger and the pattern support is lower.
On the opposite, \NegPSpan\ can extract patterns that are missed by eNSP. For instance, pattern $\seq{p}_8$ is not extracted by eNSP because its positive partner, $\langle 383, 7,383\rangle$, is not frequent. In this case, it leads eNSP to miss a potentially interesting pattern involving two anti-epileptic drugs.

Now, we look at patterns involving a switch from generic form to brand-named form of \textit{valproic acid} with the following settings $\sigma=1.2\%$, $l=3$ and $\tau=5$. Mining only positive patterns extracts the frequent patterns $\langle 114,383,114 \rangle$ and $\langle 114,114 \rangle$. It is impossible to conclude about the possible impact of a switch from $114$ to $383$ as a possible event triggering an epileptic crisis. From negative patterns extracted by \NegPSpan, we can observe that the absence of switch $\langle 114\,\neg 383\,114\rangle$ is also frequent in this dataset. Contrary to eNSP semantics which does bring a new information (that can be deduced from frequent patterns), this pattern concerns embeddings corresponding to real interesting cases thanks to gap constraints.


\section{Conclusion and Perspectives}\label{sec:ccl}
In this article, we investigated negative sequential pattern mining (NSP). It highlights that state of the art algorithms do not extract the same patterns, not only depending on their syntax and algorithms specificities, but also depending on the semantical choices. In this article, we have proposed definitions that clarify the negation semantics encountered in the literature. 
We have showed that NSP support depends on the semantics of itemset non-inclusion, two possible alternatives for considering negation of itemsets and two manners for considering multiple embeddings in a sequence.
This let us point out the limits of the state of the art algorithm eNSP that imposes a minimum support for positive partner and that is not able to deal with embedding constraints, and more especially maxgap constraints.

We have proposed \NegPSpan\ a new algorithm for mining negative sequential patterns that overcomes these limitations.
Our experiments show that \NegPSpan\ is more efficient than eNSP on datasets with medium long sequences (more than 20 itemsets) even when weak maxgap constraints are applied and that it prevents from missing possibly interesting patterns.

In addition, \NegPSpan\ is based on theoretical foundations that enable to extend it to the extraction of closed or maximal patterns to reduce the number of extracted patterns even more.

\subsection*{Acknowledgments}
  The authors would like to thank REPERES Team from Rennes University Hospital for spending time to discuss our case study results. 


\newpage
\appendix
\section{Proofs}\label{sec:proofs}

\begin{proof}[Proof of Proposition \ref{prop:sqsubset_eqembeddings}]
Let $\seq{s}=\langle s_1,\dots, s_n\rangle$ be a sequence and $\seq{p}=\langle p_1,\dots, p_m\rangle$ be a negative sequential pattern.
Let $\seq{e}=(e_i)_{i\in[m]}\in [n]^m$ be a soft-embedding of pattern $\seq{p}$ in sequence $\seq{s}$. Then, the definition matches the one for strict-embedding if $p_i$ is positive. If $p_i$ is negative then $\forall j\in [e_{i-1}+1,e_{i+1}-1],\; p_i \not\sqsubseteq s_j$, \ie $\forall j\in [e_{i-1}+1,e_{i+1}-1],\; \forall \alpha\in p_i,\; \alpha \notin s_j$ and then $\forall \alpha\in p_i,\; \forall j\in [e_{i-1}+1,e_{i+1}-1],\; \alpha \notin s_j$.
It thus implies that $\forall \alpha\in p_i,\; \alpha \notin \bigcup_{j\in [e_{i-1}+1,e_{i+1}-1]} s_j$, \ie by definition, $p_i \not\sqsubseteq \bigcup_{j\in [e_{i-1}+1,e_{i+1}-1]} s_j$.

The exact same reasoning is done in reverse way to prove the equivalence.
\end{proof}


\begin{proof}[Proof of Proposition \ref{prop:antimonotonic} (Anti-monotonicity of NSP)]
Let $\seq{p} = \langle p_1\ \neg q_1\ p_2\ \neg q_2\ \dots p_{k-1}\ \neg q_{k-1}\ p_{k}\rangle$ and $\seq{p}' =\langle p'_1\ \neg q'_1\ p'_2\ \neg q'_2\ \dots p'_{k'-1}\ \neg q'_{k'-1}\ p'_{k'}\rangle$ be two NSP s.t. $\seq{p}\lhd \seq{p}'$.
And let $\seq{s}=\langle s_1,\dots,s_n\rangle$ be a sequence s.t. $ \seq{p}'\preceq \seq{s}$, \ie it exists an embedding $(e_i)_{i\in[k']}$:
\begin{itemize}
\item $\forall i,\; e_{i+1}>e_i$ (embedding), $e_{i+1}-e_i\leq\theta$ (\textit{maxgap}) and $e_{k'}-e_1\leq\tau$ (\textit{maxspan}),
\item $\forall i,\; p'_i\subseteq s_{e_i}$,
\item $\forall j\in [e_i+1,e_{i+1}-1],\; q'_i\not\sqsubseteq s_{e_j}$
\end{itemize}

To prove that $\seq{p}\preceq \seq{s}$, we prove that $(e_i)_{i\in[k]}$ is an embedding of $ \seq{p}$ in $\seq{s}$.

Let us first consider that $k=k'$, then by definitions of $\lhd$ and the embedding,
\begin{enumerate}[(i)]
\item $\forall i\in[k],\; p_i\subseteq p'_i\subseteq s_{e_i}$,
\item $\forall i\in[k-1],\;\forall j\in [e_i+1,e_{e+1}-1],\; q'_j\not\sqsubseteq s_{e_i}$, and thus $q_j\not\sqsubseteq s_{e_i}$ (because of anti-monotonicity of $\not\sqsubseteq$ and $q_i\subseteq q'i$)
\end{enumerate}
In addition, \textit{maxgap} and \textit{maxspan} constraints are satisfied by the embedding, \ie
\begin{enumerate}[(i)]\setcounter{enumi}{3}
\item $\forall i\in[k],\;e_{i+1}-e_i\leq\theta$
\item $e_{k}-e_1=e_{k'}-e_1\leq\tau$
\end{enumerate}
This means that $(e_i)_{i\in[k]}$ is an embedding of $ \seq{p}$ in $\seq{s}$.

\medskip

Let us now consider that $k'>k$, $(i)$, $(ii)$ and $(iii)$ still holds, and we have in addition that $e_{k}<e_{k'}$ (embedding property), then $e_{k}-e_i<\theta$.
This means that $(e_i)_{i\in[k]}$ is an embedding of $\seq{p}$ in $\seq{s}$.
\end{proof}

\begin{proof}[Proof of proposition \ref{prop:complet_correct} (Complete and correct algorithm)]
The correction of the algorithm is given by lines 2-3 of Algorithm \ref{algo:CeNSP-Rec}. A pattern is outputted only if it is frequent (line 2).

We now prove the completeness of the algorithm. 
First of all, we have to prove that any pattern can be reached using a path of elementary transformations ($\leadsto \in \{\leadsto_n, \leadsto_s, \leadsto_c\}$). 
Let $\seq{p}'=\langle p'_1 \dots p'_{m}\rangle$ be a pattern with a total amount of $n$ items, $n>0$, then it is possible to define $\seq{p}$ such that  $\seq{p} \leadsto \seq{p}'$ where $\leadsto \in \{\leadsto_n, \leadsto_s, \leadsto_c\}$, and $\seq{p}$ will have exactly $n-1$ items:
\begin{itemize}
\item if the last itemset of $\seq{p}'$ is such that $|p'_{m}|>1$ we define $\seq{p}=\langle p'_1 \dots p'_{m-1}\ p_{m}\rangle$ as the pattern with the same prefix as $\seq{p}'$ and an additional itemset, $p_{m}$ such that $|p_{m}|=|p_{m}'|-1$ and $p_{m} \subset p_{m}'$: then $\seq{p} \leadsto_c \seq{p}'$
\item if the last itemset of $\seq{p}'$ is such that $|p'_{m}|==1$ and $p'_{m-1}$ is positive then we define $\seq{p}=\langle p'_1 \dots p'_{m-2}\ p'{m}\rangle$: then $\seq{p} \leadsto_s \seq{p}'$
\item if the last itemset of $\seq{p}'$ is such that $|p'_{m}|==1$ and $p'_{m-1}$ is negative (non-empty) then we define $\seq{p}=\langle p'_1 \dots p_{m-1}\ p'{m}\rangle$ where $p_{m-1}$ is such that $|p_{m-1}|=|p_{m-1}'|-1$ and $p_{m-1} \subset p_{m-1}'$: then $\seq{p} \leadsto_n \seq{p}'$
\end{itemize}
Applying recursively this rules we have that for any pattern $\seq{p}$ there is a path from the empty sequence to it: $\emptyset \leadsto^* \seq{p}$. 
We can also notice that there is only one possibility between the three extensions, meaning that these path is unique. This prove that our algorithm is not redundant.

\medskip

Second, the pruning strategy is correct such that any frequent pattern will be missed. It is given by the anti-monotonicity property. 

Let $\seq{p}$ and $\seq{p}'$ be two patterns such that $\seq{p} \leadsto \seq{p}'$ where $\leadsto \in \{\leadsto_n, \leadsto_s, \leadsto_c\}$, then is is quite obvious that $\seq{p} \lhd \seq{p}'$.
Let's now consider that $\seq{p} \leadsto^* \seq{p}'$ from $\seq{p}$ to $\seq{p}'$ then, by transitivity of $\lhd$, we also have that $\seq{p} \lhd \seq{p}'$. And then by anti-monotonicity of the support, we have that $supp(\seq{p}) \geq supp(\seq{p}') $.

Let us now proceed by contradiction and consider that $\seq{p}'$ is a pattern such that $supp(\seq{p}')\geq \sigma$ but that the algorithm didn't find out. This means that for all paths\footnote{Note that we proved that this path is actually unique.} $\emptyset \leadsto^* \seq{p}'$ there exist $\seq{p}$ such that $\emptyset \leadsto^* \seq{p} \leadsto^* \seq{p}'$ with $supp(\seq{p})<\sigma$. $\seq{p}$ the pattern that has been used to prune the search exploration of this path to $\seq{p}'$.
This is not possible considering that $\seq{p} \leadsto^* \seq{p}'$ and thus that $supp(\seq{p}) \geq supp(\seq{p}') \geq \sigma$.
\end{proof}

\section{NegPSpan extracts a superset of eNSP}\label{sec:proofs_superset}

\begin{proposition}[]\label{prop:soft_implies_strict}
Soft-embedding $\implies$ strict-embedding for patterns consisting of items.
\end{proposition}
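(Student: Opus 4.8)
The plan is to prove the stronger fact that, when the pattern is built from single items, the soft- and strict-embedding conditions coincide on every element, so that any soft-embedding is automatically a strict-embedding. First I would fix an arbitrary soft-embedding $\seq{e}=(e_i)_{i\in[m]}$ of $\seq{p}=\langle p_1\,\dots\,p_m\rangle$ in $\seq{s}=\langle s_1\,\dots\,s_n\rangle$ and check the two kinds of elements separately against Definition \ref{def:NSP_embedding}. For a positive element $p_i$, both the soft and the strict definitions impose exactly the same requirement $p_i\subseteq s_{e_i}$, so there is nothing to verify in that case.

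The real content is the negative case, and the observation driving it is that for a singleton the two non-inclusion operators collapse. Writing $p_i=\{a\}$, the definitions of $\not\preceq$ and $\not\sqsubseteq$ from Definition \ref{def:IS_notincluded} both reduce to $a\notin I$ when tested against any itemset $I$, so $\{a\}\nsubseteq I \Leftrightarrow a\notin I$ independently of which meaning is attached to $\nsubseteq$. Consequently the soft condition $\forall j\in[e_{i-1}+1,e_{i+1}-1],\; p_i\nsubseteq s_j$ reads $a\notin s_j$ for every $j$ in the scope, which by the definition of set union is the very same statement as $a\notin\bigcup_{j\in[e_{i-1}+1,e_{i+1}-1]} s_j$, \ie the strict condition $p_i\nsubseteq\bigcup_{j} s_j$. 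Hence the negative constraint holds in the strict sense as well, and $\seq{e}$ is a strict-embedding; applying this to an arbitrary soft-embedding yields the claimed implication.

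I do not expect a genuine obstacle here: the only point requiring care is to keep the argument uniform over the two admissible meanings of $\nsubseteq$, which is precisely where the singleton hypothesis is used. It is worth stressing that the restriction to items is essential and is not merely a special case of Proposition \ref{prop:sqsubset_eqembeddings}. For negative \emph{itemsets} under partial non-inclusion the implication fails, since the soft condition only requires each $s_j$ to miss \emph{some} item of $p_i$, whereas the strict condition requires a \emph{single} item of $p_i$ to be missing from the whole union --- the same phenomenon already visible in the monotonicity column of Table \ref{tab:partial-total}. This caveat is exactly what confines the statement to patterns consisting of items.
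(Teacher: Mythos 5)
Your argument is correct and follows essentially the same route as the paper's own proof: positive elements impose identical conditions in both definitions, and for a negated singleton both non-inclusion operators collapse to simple non-membership, so the pointwise condition over the scope is the same as non-membership in the union. The closing remark on why the singleton hypothesis is essential is a useful addition but does not change the substance of the argument.
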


\begin{proof}
Let $\seq{s}=\langle s_1,\dots,s_n\rangle$ be a sequence and $\seq{p}=\langle p_1,\dots,p_m\rangle$ be a NSP s.t. each $\forall i,\; |p_i|=1$ and $\seq{p}$ occurs in $\seq{s}$ according to the soft-embedding semantic.

There exists $\epsilon =(e_i)_{i\in[m]}\in[n]^m$ s.t. for all $i\in[n]$,  $p_i$ is positive implies $p_i\in s_{e_i}$ and $p_i$ is negative implies that for all $j\in [e_{i-1}+1,e_{e+1}-1],\; p_i\notin s_j$ (items only) then $p_i\notin \bigcup_{j\in [e_{i-1}+1,e_{e+1}-1]}{s_j}$ \ie $p_i \not\subseteq \bigcup_{j\in [e_{i-1}+1,e_{e+1}-1]}{s_j}$ (no matter $\not\preceq$ or $\not\sqsubset$). As a consequence $\epsilon$ is a strict-embedding of $p$.
\end{proof}

\begin{proposition}\label{prop:eNSP_implies_CeNSP}
Let $\mathcal{D}$ be a dataset of sequences of items and $\seq{p}=\langle p_1,\dots,p_m\rangle$ be a sequential pattern extracted by eNSP, then without embedding constraints $\seq{p}$ is extracted by \NegPSpan\ with the same minimum support.
\end{proposition}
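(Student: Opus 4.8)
The plan is to establish the inclusion in two independent stages and then close the argument with the completeness of the algorithm. First I would show that $\seq{p}$ lies in the search space explored by \NegPSpan; second I would show that the support of $\seq{p}$ computed by \NegPSpan\ is at least the support computed by eNSP. Granting both, the completeness of \NegPSpan\ (Proposition \ref{prop:complet_correct}) guarantees that $\seq{p}$ is generated, evaluated against the threshold, and output.

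For the first stage I would exploit eNSP's defining requirement that the \emph{positive partner} of $\seq{p}$ — the pattern obtained by turning every negated item into a positive one — be frequent. Since $\mathcal{D}$ consists of sequences of items, this positive partner is an ordinary positive sequential pattern, so by anti-monotonicity of positive support each of its items is frequent. The items of the positive partner are exactly the items occurring in $\seq{p}$, whether positive or negated; hence every negated item of $\seq{p}$ is a singleton of $\mathcal{L}^-$ and every positive item is an admissible positive extension in $\mathcal{I}^f$. With no embedding constraints in force ($\theta=\tau=\infty$) and the singleton negative itemsets trivially respecting any size bound, nothing prunes $\seq{p}$ from the enumeration.

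For the second stage I would reconcile the semantic axes along which the two methods differ. Because the data consists of items, the two notions of itemset non-inclusion coincide, and by Proposition \ref{prop:soft_implies_strict} together with Proposition \ref{prop:sqsubset_eqembeddings} the soft- and strict-embedding notions coincide as well, so the choice of embedding is immaterial and identical for both algorithms. The only genuine difference is then the occurrence semantics: eNSP uses strict occurrence (the positive part occurs and \emph{every} occurrence satisfies the negative constraints), whereas \NegPSpan\ uses soft occurrence (the positive part occurs and \emph{at least one} occurrence satisfies them). The implication $\sqsubseteq\;\Rightarrow\;\preceq$ is immediate from Definition \ref{def:neg_occurrence}: if at least one occurrence of the positive part exists and all of them satisfy the constraints, then a fortiori one of them does. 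Consequently the set of sequences supporting $\seq{p}$ under eNSP is contained in the set supporting it under \NegPSpan, giving $supp(\seq{p})\geq\sigma$ for the common threshold $\sigma$ when measured by \NegPSpan.

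Combining the two stages, $\seq{p}$ is a frequent pattern lying in the search space of \NegPSpan, hence extracted. The point I would treat most carefully — and the main obstacle — is the verification that the occurrence comparison respects the existential clause: soft occurrence demands that the positive part occur at least once, and I must argue this clause is inherited rather than assumed. It is, because strict occurrence already presupposes at least one occurrence of the positive part, so no sequence counted by eNSP can fail the existence requirement of \NegPSpan.
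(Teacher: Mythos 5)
Your proposal is correct and follows essentially the same route as the paper's proof: reduce to singleton itemsets via the frequency of the positive partner, argue that $\seq{p}$ is reachable in the search space, and show that every sequence supporting $\seq{p}$ under eNSP's semantics also supports it under \NegPSpan's semantics (the paper delegates this to Proposition~\ref{prop:soft_implies_strict}). Your explicit treatment of the occurrence axis (strict occurrence $\Rightarrow$ soft occurrence, with the non-vacuity of the existential clause) is a slightly more careful rendering of the same step, not a different argument.
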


\begin{proof} 
If $\seq{p}$ is extracted by eNSP, it implies that its positive partner is frequent in the dataset $\mathcal{D}$. As a consequence, each $p_i$, $i\in[m]$ is a singleton itemset.

According to the search space of \NegPSpan\ defined by $\lhd$ if $\seq{p}$ is frequent then it will be reached by the depth-first search.
Then it is sufficient to prove that for any sequence $\seq{s}=\langle s_1,\dots,s_n\rangle \in \mathcal{D}$ such that $\seq{p}$ occurs in $\seq{s}$ according to eNSP semantic (strict-embedding, strong absence), then $\seq{p}$ also occurs in $\seq{s}$ according to the \NegPSpan\ semantics (soft-embedding, weak absence). With that and considering the same minimum support threshold, $\seq{p}$ is frequent according to \NegPSpan.
Proposition \ref{prop:soft_implies_strict} gives this result.
\end{proof}

Then we conclude that \NegPSpan\ extracts more patterns than eNSP on sequences of items. In fact, \NegPSpan\ can extract patterns with negative itemsets larger than 2.

eNSP extract patterns that are not extracted by \NegPSpan\ on sequences of itemsets.
Practically, \NegPSpan\ uses a size limit for negative itemsets $\nu\geq 1$.
eNSP extracts patterns whose positive partners are frequent. The positive partner, extracted by PrefixSpan may hold itemsets  larger than $\nu$, and if the pattern with negated itemset is also frequent, then this pattern will be extract by eNSP, but not by \NegPSpan.

\end{document}